\newtheorem{thm}{Theorem}
\newtheorem{lem}{Lemma}
\newcommand{\expect}[1]{\mathbb{E}\left\{#1\right\}}
\newcommand{\defequiv}{\mbox{\raisebox{-.3ex}{$\overset{\vartriangle}{=}$}}}
\begin{document}

\title{Optimal Power Cost Management Using Stored Energy in Data Centers}

\author{
{Rahul Urgaonkar}, {Bhuvan Urgaonkar}$\dag$, {Michael J. Neely}, and {Anand Sivasubramaniam}$\dag$\\
\begin{tabular}{cccc}
Dept. of EE-Systems,&$\dag$ Dept. of CSE,\\
University of Southern California& The Pennsylvania State University,\\
\{urgaonka,mjneely\}@usc.edu& \{bhuvan,anand\}@cse.psu.edu
\end{tabular}
}


\maketitle
\begin{abstract}
Since the electricity bill of a data center constitutes a significant portion of
its overall operational costs, reducing this has become important. We investigate
cost reduction opportunities that arise by the use of uninterrupted power supply
(UPS) units as energy storage devices. This represents a deviation from the usual
use of these devices as mere transitional fail-over mechanisms between utility
and captive sources such as diesel generators.   
We consider the problem of opportunistically using these devices 
to reduce the time average electric utility bill in a data center.
Using the technique of Lyapunov optimization, we develop an online control
algorithm that can optimally exploit these devices to minimize the time average cost. This algorithm
operates without any knowledge of the statistics of the workload or electricity cost processes, making it
attractive in the presence of workload and pricing uncertainties. 
An interesting feature of our algorithm is that its deviation from optimality reduces as the storage capacity is increased.
Our work opens up a new area in data center power management.

\end{abstract}


       

\section{Introduction}
\label{section:intro}


Data centers spend a significant portion of their overall operational
costs towards their electricity bills. As an example, one recent 
case study suggests that a large 15MW data center (on the more 
energy-efficient end) might spend about \$1M on its monthly electricity bill. 
In general, a data center spends between 30-50\% of its operational expenses
towards power. A large body of research addresses these expenses by 
reducing the energy consumption of these data centers. This includes 
designing/employing hardware with better power/performance trade-offs~\cite{Soyeon07,Zhu04,Gurumurthi03}, 
software techniques for power-aware scheduling~\cite{Lebeck00}, workload migration, 
resource consolidation~\cite{Muse}, among others. Power prices exhibit 
variations along time, space (geography), and even across utility providers.
As an example, consider Fig. \ref{fig:one_week} that shows the average hourly spot market prices
for the Los Angeles Zone LA1 obtained from CAISO~\cite{caiso}.
These correspond to the week of 01/01/2005-01/07/2005
and denote the average price of 1 MW-Hour of electricity.
Consequently, minimization of energy consumption need not coincide with that of the electricity bill.

\begin{figure}
\centering
\includegraphics[height=3.5cm, width=8.5cm,angle=0]{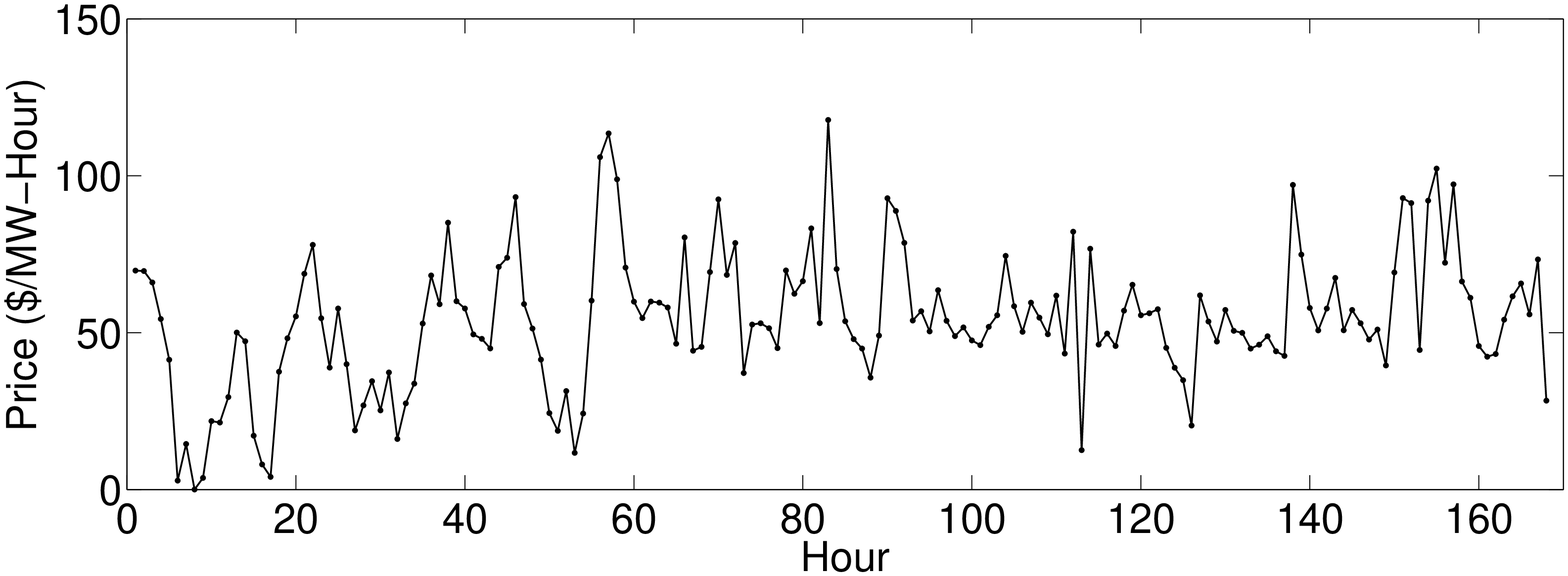}
\caption{Average hourly spot market price during the week of 01/01/2005 - 01/07/2005 for LA1 Zone \cite{caiso}.}
\label{fig:one_week}
\end{figure}

Given the diversity within power price and availability, attention has
recently turned towards {\em demand response} (DR) within 
data centers. DR within a data center (or a set of related data centers) 
attempts to optimize the electricity bill by adapting its needs to the
temporal, spatial, and cross-utility diversity exhibited by power price. 
The key idea behind these techniques is to preferentially 
shift power draw (i) to times and places or (ii) from utilities offering 
cheaper prices. Typically some constraints in the form of performance 
requirements for the workload (e.g., response times offered to the clients of 
a Web-based application) limit the cost reduction benefits that can result 
from such DR. Whereas existing DR techniques have relied on various forms
of workload scheduling/shifting, a complementary knob to facilitate such 
movement of power needs is offered by {\em energy storage devices}, typically
uninterrupted power supply (UPS) units, residing in data centers.

A data center
deploys captive power sources, typically diesel generators (DG), that it uses
for keeping itself powered up when the utility experiences an outage. The
UPS units serve as a bridging mechanism to facilitate this transition from 
utility to DG: upon a utility failure, the data center is kept powered by
the UPS unit using energy stored within its batteries, before the DG
can start up and provide power. Whereas this transition takes only 10-20 seconds, 
UPS units have enough battery capacity to keep the entire data center powered
at its maximum power needs for anywhere between 5-30 minutes. 
Tapping into the energy reserves
of the UPS unit can allow a data center to improve its electricity bill. 
Intuitively, the  data center would store energy within the UPS unit when 
prices are low and use this to augment the draw from the utility when prices 
are high.

In this paper, we consider the problem of developing an 
online control policy to exploit the UPS unit along with the presence of
delay-tolerance within the workload to optimize the data center's
electricity bill. This is a challenging problem because 
data centers experience time-varying workloads and power 
prices with possibly unknown statistics. Even when statistics can be 
approximated (say by learning using past observations), traditional approaches 
to construct optimal control policies involve the use of Markov Decision
Theory and Dynamic Programming \cite{DP}. It is well known that these techniques
suffer from the ``curse of dimensionality'' where the complexity of
computing the optimal strategy grows with the system size. Furthermore, 
such solutions result in hard-to-implement systems, where significant
re-computation might be needed when statistics change. 

In this work, we make use of a different approach that can overcome the challenges 
associated with dynamic programming. This approach is based on
the recently developed technique of Lyapunov optimization \cite{neely-NOW}\cite{neely_new} 
that enables the design of online control algorithms for such time-varying systems. These
algorithms operate \emph{without requiring any knowledge of the system statistics} 
and are easy to implement. We design such an algorithm for optimally exploiting the UPS unit and
delay-tolerance of workloads to minimize the time average cost. 
We show 
that our algorithm  can get within $O(1/V)$ of the optimal solution where the maximum value of $V$ is limited by battery capacity. 
We note that, for the same parameters, a dynamic programming based approach (if it can be solved) will yield a better result than our
algorithm. However, this gap reduces as the battery capacity is increased.
Our algorithm is thus most useful when such scaling is practical.





\section{Related Work}
\label{sec:back}

One recent body of work proposes online algorithms for using UPS 
units for cost reduction via shaving workload ``peaks'' 
that correspond to 
higher energy prices~\cite{CUNYPeakShaving08,CUNYPeakShaving08_2}. This work is highly complementary to ours 
in that it offers a worst-case competitive ratio analysis while our
approach looks at the average case performance.  
Whereas a variety of work has looked at workload shifting for power cost
reduction~\cite{Zhu04} or other reasons such as performance and 
availability~\cite{Muse},  our work differs both due to its usage
of energy storage as well as  
the cost optimality guarantees offered by our technique. Some research
has considered consumers with  access to multiple utility providers, 
each with a different carbon profile, power price and availability and 
looked at optimizing cost subject to performance and/or carbon emissions
constraints~\cite{bianchini09_2}. 
Another line of work has looked at cost reduction
opportunities offered by geographical variations within utility prices for
data centers where portions of workloads could be serviced from one of
several locations~\cite{bianchini09_2,Qureshi09}. Finally, \cite{marios} considers
the use of rechargeable batteries for maximizing system utility in a wireless network. 
 While all of this research
is highly complementary to our work, there are three key differences: 
(i) our investigation of energy storage as an enabler of cost reduction, 
(ii) our use of the technique of Lyapunov optimization which allows us to offer a provably cost optimal solution,
and (iii) combining energy storage with delay-tolerance within workloads.





\section{Basic Model}
\label{section:basic_model}

We consider a time-slotted model. In the basic model, we assume that in every slot, the total power demand generated by the data center 
in that slot must be met in the current slot itself (using a combination of power drawn from the utility and the battery). Thus, any buffering of the 
workload generated by the data center is not  allowed. We will relax this constraint later in Sec. \ref{section:ext_model} when we allow buffering of
some of the workload while providing worst case delay guarantees.
In the following, we use the terms UPS and battery interchangeably. 

\begin{figure}
\centering
\includegraphics[height=2.5cm, width=7.5cm,angle=0]{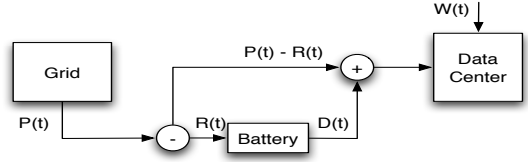}
\caption{Block diagram for the basic model.}
\label{fig:one}
\end{figure}

\subsection{Workload Model}
\label{section:workload_model}

Let $W(t)$ be total workload (in units of power) generated in slot $t$. 
Let $P(t)$ be the total power drawn from the grid in slot $t$ out of which $R(t)$ is used to recharge the battery. 
Also, let $D(t)$ be the total power discharged from the battery in slot $t$. Then in the basic model, the
following constraint must be satisfied in every slot (Fig. \ref{fig:one}):
\begin{align}
W(t) = P(t) - R(t) + D(t)
\label{eq:w_t}
\end{align}
Every slot, a control algorithm observes $W(t)$ and makes decisions about how much power to draw from the grid in that slot, i.e., $P(t)$,
and how much to recharge and discharge the battery, i.e., $R(t)$ and $D(t)$. Note that by (\ref{eq:w_t}), having chosen $P(t)$ and $R(t)$
completely determines $D(t)$.

\emph{Assumptions on the statistics of $W(t)$}: The workload process $W(t)$ is assumed to vary randomly taking values from 
a set $\mathcal{W}$ of non-negative values and is not influenced by past control decisions. 
The set $\mathcal{W}$ is assumed to be finite, with potentially arbitrarily large size. 
The underlying probability distribution or statistical characterization of $W(t)$
is not necessarily known. 
We only assume that its maximum value is finite, i.e., $W(t) \leq W_{max}$ for all $t$.
Note that unlike existing work in this domain, we do not make assumptions such as Poisson arrivals or exponential
service times.


For simplicity, in the basic model we assume that $W(t)$ evolves according to an i.i.d. process noting that
the algorithm developed for this case can be applied without any modifications to non i.i.d. scenarios as well.
The analysis and performance guarantees for the non i.i.d. case can be obtained
using the delayed Lyapunov drift and $T$ slot drift techniques developed in \cite{neely-NOW}\cite{neely_new}. 


\subsection{Battery Model}
\label{section:battery_model}


Ideally, we would like to incorporate the following idiosyncrasies of
battery operation into our model. First, batteries become unreliable as they
are charged/discharged, with higher depth-of-discharge (DoD) - percentage of 
maximum charge removed during a discharge cycle - causing faster degradation 
in their reliability. This dependence between the useful lifetime of a 
battery and how it is discharged/charged is expressed via battery lifetime 
charts~\cite{BatteryBook}. For example, with lead-acid batteries that are commonly 
used in UPS units, 20\% DoD yields $1400$ cycles ~\cite{UPS_DepthOfDischarge}.
Second, batteries have conversion loss whereby a portion of the energy 
stored in them is lost when discharging them (e.g., about 10-15\% for
lead-acid batteries). Furthermore, certain regions of battery operation
(high rate of discharge) are more inefficient than others.  
Finally, the storage itself maybe ``leaky'', so that
the stored energy decreases over time, even in the absence of any discharging.

For simplicity, in the basic model we will assume that there is no power loss either 
in recharging or discharging the batteries, noting that this can be easily generalized to
the case where a fraction of $R(t), D(t)$ is lost. 
We will also assume that the batteries  are not leaky, so that the stored energy level decreases only when they are 
discharged. This is a reasonable assumption when the time scale over which the 
loss takes place is much larger than that of interest to us. To model the effect of 
repeated recharging and discharging on the battery's lifetime, we assume that with 
each recharge and discharge operation, a fixed cost (in dollars) of $C_{rc}$ and 
$C_{dc}$ respectively is incurred. 
The choice of these  parameters would affect the trade-off between the cost of the battery itself and the cost 
reduction benefits it offers. For example, suppose a new battery costs $B$ dollars and 
it can sustain $N$ discharge/charge cycles (ignoring DoD for now). Then 
setting $C_{rc} = C_{dc} = B/N$ would amount to expecting the battery to ``pay for itself'' by augmenting the utility $N$ times over its lifetime.


In any slot, we assume that one can either recharge or discharge the battery or do neither, but not both. This means 
that for all $t$, we have:
\begin{align}
R(t) > 0 \implies D(t) = 0, \; D(t) > 0 \implies R(t) = 0
\label{eq:rt_dt}
\end{align}
Let $Y(t)$ denote the battery energy level
in slot $t$. Then, the dynamics of $Y(t)$ can be expressed as:
\begin{align}
Y(t+1) = Y(t) - D(t) + R(t)
\label{eq:y_t}
\end{align}
The battery is assumed to have a finite capacity $Y_{max}$ so that $Y(t) \leq Y_{max}$ for all $t$. 
Further, for the purpose of reliability, it may be required to ensure that a minimum energy level $Y_{min} \geq 0$ is maintained at 
all times. For example, this could represent the amount of energy required to support the data center operations until a secondary 
power source (such as DG) is activated in the event of a grid outage.
Recall that the UPS unit is integral to the availability of power supply to the data center upon utility outage. 
Indiscriminate discharging of UPS can leave the data center in situations where it is unable to safely fail-over 
to DG upon a utility outage. Therefore, discharging the UPS must be done carefully so that it still possesses enough 
charge so reliably carry out its role as a transition device between utility and DG.
Thus, the following condition must be met in every slot under any feasible control algorithm:
\begin{align}
Y_{min} \leq Y(t) \leq Y_{max}
\label{eq:y_t_bound}
\end{align}

The effectiveness of the online control algorithm we present in Sec. \ref{section:optimal_algorithm} will depend on
the magnitude of the difference $Y_{max} - Y_{min}$. In most practical scenarios of interest, this value is expected to be at least
moderately large: recent work suggests that storing energy $Y_{min}$ to last about a 
minute is sufficient to offer reliable data center operation~\cite{hp_avail_10}, 
while $Y_{max}$ can vary between 5-20 minutes (or even higher) due to reasons such as UPS units being available only in certain sizes 
and the need to keep room for future IT growth. Furthermore, the UPS units are sized based on the \emph{maximum provisioned} capacity 
of the data center, which is itself often substantially (up to twice~\cite{GoogleBook}) higher than the maximum actual power demand. 


The initial charge level in the battery is given by $Y_{init}$ and satisfies $Y_{min} \leq Y_{init} \leq Y_{max}$.
Finally, we assume that the maximum amounts by which we can recharge or discharge the battery in any slot are bounded. Thus, we have 
$\forall t$:
\begin{align}
0 \leq R(t) \leq R_{max}, \; 0 \leq D(t) \leq D_{max}
\label{eq:rt_dt_bound}
\end{align}
We will assume that $Y_{max} - Y_{min} > R_{max} + D_{max}$ while noting that
in practice, $Y_{max} - Y_{min}$ is much larger than $R_{max} + D_{max}$. 
Note that any feasible control decision on $R(t), D(t)$ must ensure that both of the constraints (\ref{eq:y_t_bound}) and (\ref{eq:rt_dt_bound}) are satisfied. 
This is equivalent to the following:
\begin{align}
&0 \leq R(t) \leq \min[R_{max}, Y_{max} - Y(t)] \label{eq:rt_bound} \\
&0 \leq D(t) \leq \min[D_{max}, Y(t) - Y_{min}] \label{eq:dt_bound}
\end{align}

\subsection{Cost Model}
\label{section:cost_model}

The cost per unit of power drawn from the grid in slot $t$ is denoted by $C(t)$. In general, it can depend on both $P(t)$, 
the total amount of power drawn in slot $t$, and an auxiliary state variable $S(t)$, that captures parameters such as time of day, 
identity of the utility provider, etc. For example, the per unit cost may be higher during business hours, etc. 
Similarly, for any fixed $S(t)$, it may be the case that $C(t)$ increases with $P(t)$ so that per unit cost of electricity 
increases as more power is drawn. This may be because the utility provider wants to discourage heavier loading on the grid. 
Thus, we assume that $C(t)$ is a function of both $S(t)$ and $P(t)$ and we denote this as:
\begin{align}
C(t) = \hat{C}(S(t), P(t))
\label{eq:c_hat}
\end{align}
For notational convenience, we will use $C(t)$ to denote the per unit cost in the rest of the paper noting
that the dependence of $C(t)$ on $S(t)$ and $P(t)$ is implicit.

The auxiliary state process $S(t)$ is assumed to evolve independently of the decisions taken by any control policy. 
For simplicity, we assume that every slot it takes values from a finite but arbitrarily large set $\mathcal{S}$ 
in an i.i.d. fashion according to a potentially unknown distribution.
This can again be generalized to non i.i.d. Markov modulated scenarios using the techniques developed in \cite{neely-NOW}\cite{neely_new}.
For each $S(t)$, the unit cost is assumed to be a non-decreasing function of $P(t)$. 
Note that it is not necessarily convex or strictly monotonic or continuous.
This is quite general and can be used to model a variety of scenarios.
A special case is when $C(t)$ is only a function of $S(t)$. The optimal control action for this case has
a particularly simple form and we will highlight this in Sec. \ref{section:case1}.
The unit cost is assumed to be non-negative and finite for all $S(t), P(t)$.

We assume that the maximum amount of power that can be drawn from the grid in any slot is upper bounded by $P_{peak}$. Thus, we have for all $t$:
\begin{align}
0 \leq P(t) \leq P_{peak}
\label{eq:p_peak}
\end{align}
Note that if we consider the original scenario where batteries are not used, then $P_{peak}$ must be such that all workload can be satisfied. Therefore,
$P_{peak} \geq W_{max}$. 

Finally, let $C_{max}$ and $C_{min}$ denote the maximum and minimum per unit cost respectively over all $S(t), P(t)$.
Also let $\chi_{min} > 0$ be a constant such that for any $P_1, P_2 \in [0, P_{peak}]$ where $P_1 \leq P_2$,
the following holds for all $\chi \geq \chi_{min}$:
\begin{align}
P_1(-\chi + C(P_1, S)) \geq P_2(-\chi + C(P_2, S)) \qquad \forall S \in \mathcal{S}
\label{eq:chi_defn}
\end{align}
For example, when $C(t)$ does not depend on $P(t)$, then $\chi_{min} = C_{max}$ satisfies (\ref{eq:chi_defn}). 
This follows by noting that 
$(-C_{max} + C(t)) \leq 0$ for all $t$. Similarly, suppose $C(t)$ does not depend on $S(t)$, but is
continuous, convex, and increasing in $P(t)$. Then, it can be shown that
$\chi_{min} = C(P_{peak}) + P_{peak} C'(P_{peak})$ satisfies (\ref{eq:chi_defn}) where $C'(P_{peak})$ denotes the
derivative of $C(t)$ evaluated at $P_{peak}$.
In the following, we assume that such a finite $\chi_{min}$ exists for the given cost model.
We further assume that $\chi_{min} > C_{min}$. The case of $\chi_{min} = C_{min}$ corresponds to the
degenerate case where the unit cost is fixed for all times and we do not consider it.


\emph{What is known in each slot?}: We assume that the value of $S(t)$ and the form of the function $C(P(t), S(t))$ for that slot is known.
For example, this may be obtained beforehand using pre-advertised prices by the utility provider. 
We assume that given an $S(t)=s$, $C(t)$ is a deterministic function of $P(t)$ and this holds for all $s$. Similarly,
the amount of incoming workload $W(t)$ is known at the beginning of each slot.






Given this model, our goal is to design a control algorithm that minimizes the time average cost while meeting all the constraints.
This is formalized in the next section.

\section{Control Objective}
\label{section:objective}

Let $P(t), R(t)$ and $D(t)$ denote the control decisions made in slot $t$ by any feasible policy under the basic model
as discussed in Sec. \ref{section:basic_model}. These must satisfy the constraints
(\ref{eq:w_t}), (\ref{eq:rt_dt}), (\ref{eq:rt_bound}), (\ref{eq:dt_bound}), and (\ref{eq:p_peak}) every slot.
We define the following indicator variables that are functions of the control decisions
regarding a recharge or discharge operation in slot $t$:
\begin{displaymath}
1_R(t) = \left\{ \begin{array}{ll}
1 & \textrm{if $R(t) > 0$}\\
0 & \textrm{else}
\end{array} \right. \;\;
1_D(t) = \left\{ \begin{array}{ll}
1 & \textrm{if $D(t) > 0$}\\
0 & \textrm{else}
\end{array} \right.
\end{displaymath}
Note that by (\ref{eq:rt_dt}), at most one of $1_R(t)$ and $1_C(t)$ can take the value $1$.
Then the total cost incurred in slot $t$ is given by
$P(t) C(t) + 1_R(t) C_{rc} + 1_D(t) C_{dc}$.
The time-average cost under this policy is given by:
\begin{align}
\lim_{t\rightarrow\infty} \frac{1}{t} \sum_{\tau = 0}^{t-1} \expect{P(\tau) C(\tau) + 1_R(\tau) C_{rc} + 1_D(\tau) C_{dc}}
\label{eq:nine}
\end{align}
where the expectation above is with respect to the potential randomness of the control policy.
Assuming for the time being that this limit exists, our goal is to design a control algorithm that minimizes this time average 
cost subject to the constraints described in the basic model. Mathematically, this can be stated as the following 
\emph{stochastic optimization problem}:
\begin{align*}
& \textrm{\bf{P1}}: \\
&\textrm{Minimize:} \;\;  \lim_{t\rightarrow\infty} \frac{1}{t} \sum_{\tau = 0}^{t-1} \expect{P(\tau) C(\tau) + 1_R(\tau) C_{rc} + 1_D(\tau) C_{dc}} \\
& \textrm{Subject to:}\;\; \textrm{Constraints} \; (\ref{eq:w_t}), (\ref{eq:rt_dt}), (\ref{eq:rt_bound}), (\ref{eq:dt_bound}), (\ref{eq:p_peak}) 
\end{align*}
The finite capacity and underflow constraints (\ref{eq:rt_bound}), (\ref{eq:dt_bound}) make this a particularly challenging problem to solve
even if the statistical descriptions of the workload and unit cost process are known. For example, the traditional 
approach based on Dynamic Programming \cite{DP} would have to compute the optimal control action for all possible combinations of
the battery charge level and the system state $(S(t), W(t))$.
Instead, we take an alternate approach based on the technique of Lyapunov optimization, taking the
\emph{finite} size queues constraint explicitly into account.


Note that a solution to the problem \textbf{P1} is a control policy that determines the sequence of feasible control decisions $P(t)$, $R(t)$, $D(t)$, 
to be used. Let $\phi_{opt}$ denote the value of the objective in problem \textbf{P1} under an optimal control policy. 
Define the time-average rate of recharge and discharge under any policy as follows:
\begin{align}
\overline{R} = \lim_{t\rightarrow\infty} \frac{1}{t} \sum_{\tau = 0}^{t-1} \expect{R(\tau)}, \; 
\overline{D} = \lim_{t\rightarrow\infty} \frac{1}{t} \sum_{\tau = 0}^{t-1} \expect{D(\tau)}
\label{eq:r_bar_d_bar} 
\end{align}
Now consider the following problem:
\begin{align}
& \textrm{\bf{P2}}: \nonumber\\
&\textrm{Minimize:} \;\;  \lim_{t\rightarrow\infty} \frac{1}{t} \sum_{\tau = 0}^{t-1} \expect{P(\tau) C(\tau) + 1_R(\tau) C_{rc} + 1_D(\tau) C_{dc}} \nonumber\\
&\textrm{Subject to:}\;\; \textrm{Constraints} \; (\ref{eq:w_t}), (\ref{eq:rt_dt}), (\ref{eq:rt_dt_bound}), (\ref{eq:p_peak}) \nonumber\\
& \qquad \qquad \;\;\;\;\; \overline{R} = \overline{D} \label{eq:eleven}
\end{align}
Let $\hat{\phi}$ denote the value of the objective in problem \textbf{P2} under an optimal control policy. 
By comparing \textbf{P1} and \textbf{P2}, it can be shown that \textbf{P2} is less constrained than \textbf{P1}. Specifically, 
any feasible solution to \textbf{P1} would also satisfy \textbf{P2}. To see this, consider any policy that satisfies (\ref{eq:rt_bound}) and (\ref{eq:dt_bound})
for all $t$. This ensures that constraints (\ref{eq:y_t_bound}) and (\ref{eq:rt_dt_bound}) are always met by this policy. Then summing equation (\ref{eq:y_t}) over 
all $\tau \in \{0, 1, 2, \ldots, t-1\}$ under this policy and taking expectation of both sides yields:
\begin{align*}
\expect{Y(t)} - Y_{init} = \sum_{\tau =0}^{t-1} \expect{R(\tau) - D(\tau)}
\end{align*}
Since $Y_{min} \leq Y(t) \leq Y_{max}$ for all $t$, dividing both sides by $t$ and taking limits as $t\rightarrow\infty$ yields $\overline{R} = \overline{D}$.
Thus, this policy satisfies constraint (\ref{eq:eleven}) of \textbf{P2}. Therefore, any feasible solution to \textbf{P1} also satisfies \textbf{P2}.
This implies that the optimal value of \textbf{P2} cannot exceed that of \textbf{P1}, so that $\hat{\phi} \leq \phi_{opt}$.

Our approach to solving \textbf{P1} will be based on this observation. We first note that it is easier to characterize the optimal solution to 
\textbf{P2}. This is because the dependence on $Y(t)$ has been removed.
Specifically, it can be shown that the optimal solution to \textbf{P2} can be achieved by
a stationary, randomized control policy that chooses control actions $P(t), D(t), R(t)$ every slot purely as a function (possibly randomized)  
of the current state $(W(t), S(t))$ and \emph{independent of} the battery charge level $Y(t)$. This fact is presented in the following lemma:
\vspace{-0.1in}
\begin{lem} (Optimal Stationary, Randomized Policy): If the workload process $W(t)$ and auxiliary process $S(t)$ are i.i.d. over slots, then there
exists a stationary, randomized policy that takes control decisions $P^{stat}(t), R^{stat}(t), D^{stat}(t)$
every slot purely as a function (possibly randomized) of the current state $(W(t), S(t))$ while satisfying the
 constraints (\ref{eq:w_t}), (\ref{eq:rt_dt}), (\ref{eq:rt_dt_bound}), (\ref{eq:p_peak}) and providing the following guarantees:
\begin{align}
&\expect{R^{stat}(t)} = \expect{D^{stat}(t)} \label{eq:stat1} \\
&\expect{P^{stat}(t) C(t) + 1_R^{stat}(t) C_{rc} + 1_D^{stat}(t) C_{dc}} = \hat{\phi} \label{eq:stat2}
\end{align}
where the expectations above are with respect to the stationary distribution of $(W(t), S(t))$ and the randomized control decisions.
\label{lem:one}
\end{lem}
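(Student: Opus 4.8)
The plan is to establish this lemma by a standard application of the Caratheodory-type argument that underlies the theory of Lyapunov optimization (see \cite{neely-NOW}), adapted to the fact that problem \textbf{P2} has no dependence on the battery state $Y(t)$. First I would observe that every feasible policy for \textbf{P2}, in each slot $t$, chooses a control action $(P(t),R(t),D(t))$ subject to constraints (\ref{eq:w_t}), (\ref{eq:rt_dt}), (\ref{eq:rt_dt_bound}), (\ref{eq:p_peak}); since $W(t)$ is given in slot $t$ and $D(t)$ is determined by $P(t)$ and $R(t)$ via (\ref{eq:w_t}), the only real freedom is the choice of $P(t)$ and $R(t)$, and the set of feasible $(P(t),R(t))$ pairs depends only on the current state $(W(t),S(t))$. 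The cost incurred in slot $t$, namely $P(t)C(t)+1_R(t)C_{rc}+1_D(t)C_{dc}$, and the quantity $R(t)-D(t)$ are both (bounded) functions of the state $(W(t),S(t))$ and the action.

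Next I would introduce, for each fixed state $(w,s)\in\mathcal{W}\times\mathcal{S}$, the set of achievable expected one-slot outcome vectors obtained by randomizing over feasible actions in that state; call this set $\Omega_{w,s}\subseteq\mathbb{R}^2$, whose elements are pairs (expected slot cost, expected $R-D$). Because $\mathcal{W}$ and $\mathcal{S}$ are finite and all quantities are bounded (by $C_{max}P_{peak}+C_{rc}+C_{dc}$ and by $R_{max}+D_{max}$ respectively), the overall set of achievable time-average outcome vectors over all feasible policies is contained in the convex hull of $\bigcup_{w,s}\Omega_{w,s}$ weighted by the stationary probabilities $\pi(w,s)$, and this hull is compact. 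The objective of \textbf{P2} is to minimize the first coordinate subject to the second coordinate being $0$ (this is exactly constraint (\ref{eq:eleven}), $\overline{R}=\overline{D}$). A minimizer exists by compactness, and equals $\hat\phi$. By Caratheodory's theorem in $\mathbb{R}^2$, the optimal point in this hull is a convex combination of at most three extreme points, each of which corresponds to a deterministic action in some state; bundling these according to which state they apply to yields, for each $(w,s)$, a (possibly randomized) rule over feasible actions. This defines the stationary randomized policy $P^{stat}(t),R^{stat}(t),D^{stat}(t)$ as a function of $(W(t),S(t))$ alone, and by construction it satisfies (\ref{eq:w_t}), (\ref{eq:rt_dt}), (\ref{eq:rt_dt_bound}), (\ref{eq:p_peak}), achieves expected slot cost $\hat\phi$ as in (\ref{eq:stat2}), and satisfies $\expect{R^{stat}(t)}=\expect{D^{stat}(t)}$ as in (\ref{eq:stat1}).

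The one technical wrinkle I would need to address carefully is the handling of the \emph{limit} in the definition of the objective and of $\overline{R},\overline{D}$: a priori an arbitrary feasible policy need not have a well-defined time-average, so I would work with $\limsup$ for the cost and argue along a subsequence of times over which the empirical joint distribution of (state, action) converges (using boundedness and finiteness of the state space), extract the limiting distribution, and show it gives a stationary randomized policy matching or beating the $\limsup$ objective while meeting $\overline{R}=\overline{D}$ in the limit. This passage from arbitrary policies to stationary ones via an accumulation-point argument is the main obstacle; once it is in place, the finite-dimensional convexity argument is routine. Alternatively, and more cleanly, I would simply cite the general existence result for stationary optimal policies for such stochastic network optimization problems from \cite{neely-NOW}\cite{neely_new}, of which \textbf{P2} is an instance (with $\overline{R}=\overline{D}$ playing the role of a single time-average equality constraint and the unconstrained, $Y$-independent structure making the theorem directly applicable), and note that the i.i.d. assumption on $(W(t),S(t))$ is exactly the hypothesis under which that theorem guarantees a stationary randomized optimum.
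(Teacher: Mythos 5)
Your proposal is correct and ultimately lands on the same approach as the paper, whose entire proof is the single sentence that the result ``follows from the framework in \cite{neely-NOW, neely_new} and is omitted for brevity.'' The Caratheodory-style sketch you give (achievable one-slot outcome sets per state, convexification, and the accumulation-point passage from arbitrary to stationary policies) is a faithful outline of the standard argument behind that cited framework, so it adds detail the paper chose to omit rather than diverging from it.
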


\begin{proof} This result follows from the framework in \cite{neely-NOW, neely_new} and is omitted for brevity.
\end{proof}

It should be noted that while it is possible to characterize and potentially compute such a policy,
it may not be feasible for the original problem \textbf{P1} as it could
violate the constraints (\ref{eq:rt_bound}) and (\ref{eq:dt_bound}). However, the existence of such a policy can be used to construct an approximately
optimal policy that meets all the constraints of \textbf{P1} using the technique of Lyapunov optimization \cite{neely-NOW}\cite{neely_new}. 
This policy is dynamic and does not
require knowledge of the statistical description of the workload and cost processes.
We present this policy and derive its performance guarantees in the next section. 
This dynamic policy is approximately optimal where the approximation factor improves as the battery capacity increases. 
Also note that the distance from
optimality for our policy is measured in terms of $\hat{\phi}$. However, since $\hat{\phi} \leq \phi_{opt}$, in practice, 
the approximation factor is better than the analytical bounds.

\section{Optimal Control Algorithm}
\label{section:optimal_algorithm}

We now present an \emph{online} control algorithm that approximately solves \textbf{P1}. This algorithm uses a control parameter $V > 0$ that affects the distance from
optimality as shown later. This algorithm also makes use of a ``queueing'' state variable $X(t)$ to track the battery charge level and is defined as follows:
\begin{align}
X(t) = Y(t) - V\chi_{min} - D_{max} - Y_{min}
\label{eq:x_t_def}
\end{align}
Recall that $Y(t)$ denotes the actual battery charge level in slot $t$ and evolves according to
(\ref{eq:y_t}). It can be seen that $X(t)$ is simply a shifted version of $Y(t)$ and its dynamics is given by: 
\begin{align}
X(t+1) = X(t) - D(t) + R(t)
\label{eq:x_t}
\end{align}
Note that $X(t)$ can be negative. 
We will show that this definition enables our algorithm to ensure that the constraint (\ref{eq:y_t_bound}) is met.

We are now ready to state the dynamic control algorithm. 
Let $(W(t), S(t))$ and $X(t)$ denote the system state in slot $t$. Then the dynamic algorithm chooses control action $P(t)$ as the 
solution to the following optimization problem:
\begin{align}
& \textrm{\bf{P3}}: \nonumber\\
&\textrm{Minimize:} \;  X(t)P(t) + V \Big[P(t) {C}(t) + 1_R(t)C_{rc} + 1_D(t) C_{dc}\Big] \nonumber\\
&\textrm{Subject to:} \;\textrm{Constraints} \; (\ref{eq:w_t}), (\ref{eq:rt_dt}), (\ref{eq:rt_dt_bound}), (\ref{eq:p_peak}) \nonumber
\end{align}
The constraints above result in the following constraint on $P(t)$:
\begin{align}
P_{low} \leq P(t) \leq P_{high} 
\label{eq:p_t_constraint}
\end{align}
where\\
$P_{low} = \max[0, W(t) - D_{max}]$ and $P_{high} = \min[P_{peak}, W(t) + R_{max}]$. 
Let $P^*(t)$ denote the optimal solution to \textbf{P3}. Then, the dynamic algorithm chooses the recharge and discharge values as follows.
\begin{displaymath}
R^*(t) = \left\{ \begin{array}{ll}
P^*(t) - W(t) & \textrm{if $P^*(t) > W(t)$}\\
0 & \textrm{else}
\end{array} \right.
\end{displaymath}
\begin{displaymath}
D^*(t) = \left\{ \begin{array}{ll}
W(t) - P^*(t) & \textrm{if $P^*(t) < W(t)$}\\
0 & \textrm{else}
\end{array} \right.
\end{displaymath}
Note that if $P^*(t) = W(t)$, then both $R^*(t)=0$ and $D^*(t)=0$ and all demand is met using power drawn from the grid.
It can be seen from the above that the control decisions satisfy the constraints
$0 \leq R^*(t) \leq R_{max}$ and $0 \leq D^*(t) \leq D_{max}$. That the finite battery constraints 
and the constraints (\ref{eq:rt_bound}), (\ref{eq:dt_bound}) are also met will be shown in 
Sec. \ref{section:performance_analysis}.

After computing these quantities, the algorithm implements them and updates the queueing variable $X(t)$ according to (\ref{eq:x_t}). This process is
repeated every slot. Note that in solving \textbf{P3}, the control algorithm only makes use of the current system state values and does not require
knowledge of the statistics of the workload or unit cost processes. Thus, it is \emph{myopic} and \emph{greedy} in nature. 
From \textbf{P3}, it is seen that the algorithm tries to recharge the battery when $X(t)$ is negative and per unit cost is low. 
And it tries to discharge the battery when $X(t)$ is positive. 
That this is sufficient to achieve optimality will be shown in Theorem \ref{thm:one}.
The queueing variable $X(t)$ plays a crucial role as making decisions purely based on prices is not necessarily optimal.

\begin{figure}
\centering
\includegraphics[height=2.5cm, width=7.5cm,angle=0]{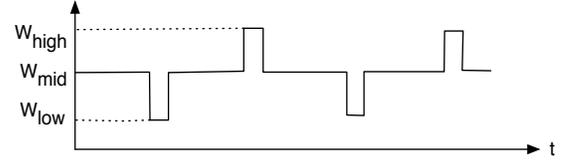}
\caption{Periodic $W(t)$ process in the example.}
\label{fig:two}
\end{figure}

To get some intuition behind the working of this algorithm, consider the following simple example. 
Suppose $W(t)$ can take three possible values from the set $\{W_{low}, W_{mid}, W_{high}\}$ where
$W_{low} < W_{mid} < W_{high}$. Similarly, $C(t)$ can take three possible values in $\{C_{low}, C_{mid}, C_{high}\}$
where $C_{low} < C_{mid} < C_{high}$ and does not depend on $P(t)$. We assume that the workload process
evolves in a frame-based periodic fashion. Specifically, in every odd numbered frame, $W(t) = W_{mid}$ for all except the
last slot of the frame when $W(t) = W_{low}$. In every even numbered frame, $W(t) = W_{mid}$ for all except the
last slot of the frame when $W(t) = W_{high}$. This is illustrated in Fig. \ref{fig:two}.
The $C(t)$ process evolves similarly, such that
$C(t) = C_{low}$ when $W(t) = W_{low}$, $C(t) = C_{mid}$ when $W(t) = W_{mid}$, and
$C(t) = C_{high}$ when $W(t) = W_{high}$. 

In the following, we assume a frame size of $5$ slots with $W_{low} = 10$, $W_{mid} = 15$, and $W_{high} = 20$ units. 
Also, $C_{low} = 2$, $C_{mid} = 6$, and $C_{high} = 10$ dollars. Finally, $R_{max} = D_{max} = 10$, $P_{peak} = 20$,
$C_{rc} = C_{dc} = 5$, $Y_{init} = Y_{min} = 0$ and we vary $Y_{max} > R_{max} + D_{max}$. In this example,
intuitively, an optimal algorithm that knows the workload and unit cost process beforehand would recharge the battery
as much as possible when $C(t) = C_{low}$ and discharge it as much as possible when $C(t) = C_{high}$. 
In fact, it can be shown that the following strategy is feasible and achieves minimum average cost:
\begin{itemize}
\item If $C(t) = C_{low}, W(t) = W_{low}$, then $P(t) = W_{low} + R_{max}$, $R(t) = R_{max}$, $D(t) = 0$. 
\item If $C(t) = C_{mid}, W(t) = W_{mid}$, then $P(t) = W_{mid}$, $R(t) = 0$, $D(t) = 0$. 
\item If $C(t) = C_{high}, W(t) = W_{high}$, then $P(t) = W_{high} - D_{max}$, $R(t) = 0$, $D(t) = D_{max}$. 
\end{itemize}
The time average cost resulting from this strategy can be easily calculated and is given by $87.0$ dollars/slot for all $Y_{max} > 10$.
Also, we note that the cost resulting from an algorithm that does not use the battery in this example is given by $94.0$ dollars/slot.

Now we simulate the dynamic algorithm for this example for different values of $Y_{max}$ for $1000$ slots ($200$ frames).
The value of $V$ is chosen to be $\frac{Y_{max} - Y_{min} - R_{max} - D_{max}}{C_{high} - C_{low}} = 
\frac{Y_{max} - 20}{8}$ (this choice will become clear in Sec. \ref{section:performance_theorem} when we relate $V$ to the
battery capacity).
Note that the number of slots for which a fully charged battery can sustain the data center at maximum load is
$Y_{max}/W_{high}$.

\begin{table}
\centering
 \begin{tabular}{|c|c|c|c|c|c|l|} \hline
   $Y_{max}$ & 20   & 30   & 40   & 50    & 75     & 100  \\ \hline
   $V$       &  0   & 1.25 & 2.5  & 3.75  & 6.875  & 10.0  \\ \hline
   Avg. Cost & 94.0 & 92.5 & 91.1 & 88.5  & 88.0   & 87.0 \\ \hline
  \end{tabular}
\caption{Average Cost vs. $Y_{max}$}
\label{table:table1}
\end{table}

In Table \ref{table:table1}, we show the time average cost achieved for different values of $Y_{max}$. It can be seen that
as $Y_{max}$ increases, the time average cost approaches the optimal value. (This behavior will be formalized in Theorem \ref{thm:one})
This is remarkable given that the dynamic algorithm operates without any knowledge of the future workload and cost processes. 
To examine the behavior of the dynamic algorithm in more detail, 
we fix $Y_{max} = 100$ and look at the sample paths of the control decisions taken by the
optimal offline algorithm and the dynamic algorithm
during the first $200$ slots. This is shown in Figs. \ref{fig:example2} and \ref{fig:example3}.
It can be seen that initially, the dynamic tends to perform suboptimally. But eventually it 
\emph{learns} to make close to optimal decisions.

\begin{figure}
\centering
\includegraphics[height=3cm, width=9cm,angle=0]{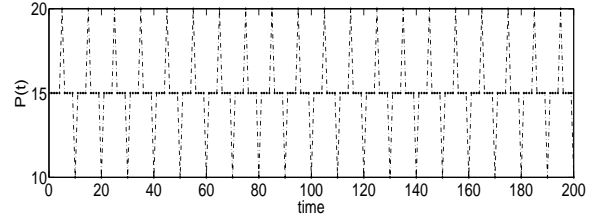}
\caption{$P(t)$ under the offline optimal solution with $Y_{max} = 100$.}
\label{fig:example2}
\end{figure}

\begin{figure}
\centering
\includegraphics[height=3cm, width=9cm,angle=0]{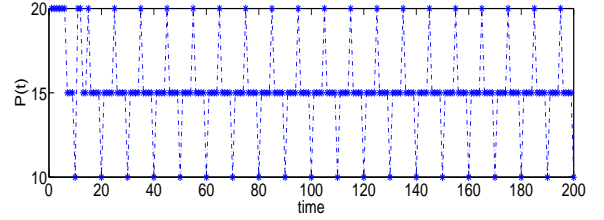}
\caption{$P(t)$ under the Dynamic Algorithm with $Y_{max} = 100$.}
\label{fig:example3}
\end{figure}


It might be tempting to conclude from this example that an algorithm based on a price threshold is optimal.
Specifically, such an algorithm makes a recharge vs. discharge decision depending on whether the current price 
$C(t)$ is smaller or larger than a threshold. However, it is easy to construct examples where the dynamic
algorithm outperforms such a threshold based algorithm. Specifically, suppose that the $W(t)$ process takes values 
from the interval $[10, 90]$ uniformly at random every slot. Also, suppose $C(t)$ takes values from the set
$\{2, 6, 10\}$ dollars uniformly at random every slot. We fix the other parameters as follows:
$R_{max} = D_{max} = 10$, $P_{peak} = 90$, $C_{rc} = C_{dc} = 1$, $Y_{init} = Y_{min} = 0$ and $Y_{max} = 100$. 
We then simulate a threshold based algorithm for different values of the threshold in the set 
$\{2, 6, 10\}$  and select the one that yields the smallest cost. This was found to be $280.7$ dollars/slot. We then simulate the
dynamic algorithm for $10000$ slots with $V = \frac{Y_{max} - 20}{10 - 2} = 10.0$ and it yields an average cost of $275.5$ dollars/slot. 
We also note that the cost resulting from an algorithm that does not use the battery in this example is given by $300.73$ dollars/slot.




We now establish two properties of the structure of the optimal solution to \textbf{P3} that will be useful
in analyzing its performance later.

\begin{lem} The optimal solution to \textbf{P3} has the following properties:
\begin{enumerate}
\item If $X(t) > -V C_{min}$, then the optimal solution always chooses $R^*(t) = 0$.
\item If $X(t) < -V \chi_{min}$, then the optimal solution always chooses $D^*(t) = 0$.
\end{enumerate}
\label{lem:two}
\end{lem}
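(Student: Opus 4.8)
The plan is to argue by contradiction in each case, using the fact that $P^*(t)$ minimizes the objective of \textbf{P3} over the interval $[P_{low}, P_{high}]$, together with the one-to-one correspondence between the choice of $P(t)$ and the induced values of $R(t)$, $D(t)$. Recall from the algorithm's definition that $R(t) > 0$ exactly when $P(t) > W(t)$, in which case $R(t) = P(t) - W(t)$ and $1_R(t) = 1$, $1_D(t) = 0$; similarly $D(t) > 0$ exactly when $P(t) < W(t)$. So the objective of \textbf{P3} can be written as a function of $P(t)$ alone: for $P(t) \geq W(t)$ it equals $X(t)P(t) + V[P(t)C(t) + 1_{\{P(t)>W(t)\}}C_{rc}]$, and for $P(t) \leq W(t)$ it equals $X(t)P(t) + V[P(t)C(t) + 1_{\{P(t)<W(t)\}}C_{dc}]$.

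For the first claim, suppose $X(t) > -VC_{min}$ but the optimal solution chose some $\tilde P = P^*(t) > W(t)$, so $R^*(t) > 0$. I would compare the objective value at $\tilde P$ against the value at the feasible point $P = W(t)$ (which is feasible since $W(t) \in [P_{low}, P_{high}]$ always, as $P_{low} = \max[0, W(t)-D_{max}] \leq W(t) \leq \min[P_{peak}, W(t)+R_{max}] = P_{high}$ using $P_{peak} \geq W_{max} \geq W(t)$). Switching from $\tilde P$ to $W(t)$ changes the objective by $X(t)(W(t) - \tilde P) + V C(t)(W(t) - \tilde P) - V C_{rc}$. Since $\tilde P > W(t)$, the first two terms together are $(\tilde P - W(t))(-X(t) - VC(t)) < (\tilde P - W(t))(VC_{min} - VC(t)) \leq 0$ because $C(t) \geq C_{min}$; and $-VC_{rc} \leq 0$. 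Hence the switch strictly decreases the objective, contradicting optimality of $\tilde P$. Therefore $R^*(t) = 0$.

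For the second claim, suppose $X(t) < -V\chi_{min}$ but the optimal solution chose some $\tilde P = P^*(t) < W(t)$, so $D^*(t) > 0$. Now I would compare against the feasible point $P_{high} = \min[P_{peak}, W(t)+R_{max}]$; note $P_{high} \geq W(t)$, so moving to $P_{high}$ incurs at most the recharge cost $C_{rc}$ and no discharge cost. The key is that the map $P \mapsto P(X(t)/V + C(P,S(t))) = P(-\chi + C(P,S(t)))$ with $\chi = -X(t)/V > \chi_{min}$ is non-increasing in $P$ on $[0,P_{peak}]$ by the defining property (\ref{eq:chi_defn}) of $\chi_{min}$; multiplying by $V$, the quantity $X(t)P + VPC(P,S(t))$ is non-increasing in $P$, so replacing $\tilde P$ by the larger value $P_{high}$ does not increase this part of the objective, while it removes the discharge cost $VC_{dc} > 0$ and adds at most $VC_{rc}$ — wait, this does not immediately close unless $C_{rc}$ is controlled. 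I would instead compare $\tilde P$ against $W(t)$: moving from $\tilde P$ to $W(t)$ changes $X(t)P + VPC(t)$ by a non-positive amount (again by (\ref{eq:chi_defn}), since both lie in $[0,P_{peak}]$ and $W(t) \geq \tilde P$), and it removes the discharge cost $VC_{dc}$ while adding no recharge cost (since at $P = W(t)$ we have $R = D = 0$). So the objective strictly decreases, contradiction. Hence $D^*(t) = 0$.

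The main obstacle is getting the comparison points and the monotonicity direction exactly right in the second case: one must use the global inequality (\ref{eq:chi_defn}) — valid for all $\chi \geq \chi_{min}$ and all pairs $P_1 \leq P_2$ — rather than a local/derivative argument, since $C(\cdot, S)$ is only assumed non-decreasing (not convex, continuous, or differentiable), and one must choose $W(t)$ rather than $P_{high}$ as the comparison target so that the recharge penalty $C_{rc}$ never enters. I would double-check that $W(t)$ is always feasible for \textbf{P3} (it is, as noted above) and that the strict inequality $\chi_{min} > C_{min}$, together with $V > 0$, makes the relevant inequalities strict, which is what rules out the boundary choices $R^*(t) > 0$ and $D^*(t) > 0$ respectively.
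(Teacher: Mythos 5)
Your proof follows essentially the same route as the paper's (Appendix A): in each case you compare the objective of \textbf{P3} at the supposedly optimal point against the always-feasible point $P(t)=W(t)$ (no recharge, no discharge), using monotonicity of $C(\cdot,S)$ in $P$ for part 1 and the defining inequality (\ref{eq:chi_defn}) with $\chi=-X(t)/V$ for part 2. The substance is right, but two details need tightening. First, in part 1 the change in the term $VP(t)C(t)$ is not literally $VC(t)(W(t)-\tilde P)$ when $C$ depends on $P$; you need the extra monotonicity step $W(t)C(W(t),S)\le W(t)C(\tilde P,S)$ before factoring, which is exactly how the paper arranges it by writing the comparison as $[X(t)+VC(W(t)+\delta)](W(t)+\delta) > [X(t)+VC(W(t))]W(t)$. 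Second, and more importantly, in part 2 your strict decrease comes only from dropping $VC_{dc}$; but the model (and the paper's own simulations) allow $C_{dc}=0$, in which case your argument yields only a non-strict comparison and does not rule out $D^*(t)>0$ as an optimal choice. The correct source of strictness is the strict hypothesis $X(t)<-V\chi_{min}$: writing $\chi=-X(t)/V=\chi_{min}+\epsilon$ with $\epsilon>0$, applying (\ref{eq:chi_defn}) at $\chi_{min}$ and adding the extra term $\epsilon(P_2-P_1)>0$ gives $P_1(-\chi+C(P_1,S))>P_2(-\chi+C(P_2,S))$ strictly whenever $P_1<P_2$. Your closing remark attributes the strictness to $\chi_{min}>C_{min}$, which plays no role in this lemma (it only ensures $V_{max}$ in (\ref{eq:v_max}) is finite); the strictness in part 1 likewise comes from the strict hypothesis $X(t)>-VC_{min}$, which you do use correctly there.
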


\begin{proof} See Appendix A.
\end{proof}

\subsection{Solving \textbf{P3}}
\label{section:solving_p3}

In general, the complexity of solving \textbf{P3} depends on the structure of the unit cost function $C(t)$. For many cases of practical interest,
\textbf{P3} is easy to solve and admits closed form solutions that can be implemented in real time. We consider two such cases here.
Let $\theta(t)$ denote the value of the objective in \textbf{P3} 
when there is no recharge or discharge. Thus $\theta(t) = W(t)(X(t) + V C(t))$.

\subsubsection{$C(t)$ does not depend on $P(t)$}
\label{section:case1}

Suppose that $C(t)$ depends only on $S(t)$ and not on $P(t)$. 
We can rewrite the expression in the objective of \textbf{P3} as $P(t)(X(t) + V C(t)) + 1_R(t) V C_{rc} + 1_D(t) V C_{dc}$. 
Then, the optimal solution has the following simple threshold structure. 
\begin{enumerate}
\item If $X(t) + VC(t) > 0$, then $R^*(t) = 0$ so that there is no recharge and we have the following two cases:
\begin{enumerate}
\item If $P_{low}(X(t) + VC(t)) + V C_{dc} < \theta(t)$, then discharge as much as possible, so 
that we get $D^*(t) = \min[W(t), D_{max}]$, $P^*(t) = \max[0, W(t) - D_{max}]$.
\item Else, draw all power from the grid. This yields $D^*(t) = 0$ and $P^*(t) = W(t)$.
\end{enumerate}
\item Else if $X(t) + VC(t) \leq 0$, then $D^*(t) = 0$ so that there is no discharge and we have the following two cases:
\begin{enumerate}
\item If $P_{high}(X(t) + VC(t)) + V C_{rc} < \theta(t)$, then recharge as much as possible. This yields
 $R^*(t) = \min[P_{peak} - W(t), R_{max}]$ and $P^*(t) = \min[P_{peak}, W(t) + R_{max}]$.
\item Else, draw all power from the grid. This yields $R^*(t) = 0$ and $P^*(t) = W(t)$.
\end{enumerate}
\end{enumerate}



We will show that this solution is feasible and does not violate the finite battery constraint in Sec. \ref{section:performance_analysis}.

\subsubsection{$C(t)$ convex, increasing in $P(t)$}
\label{section:case2}

Next suppose for each $S(t)$, $C(t)$ is convex and increasing in $P(t)$. For example, $\hat{C}(S(t), P(t))$ may have the form
$\alpha(S(t))P^2(t)$ where $\alpha(S(t)) > 0$ for all $S(t)$. 
In this case, \textbf{P3} becomes a standard convex optimization problem in a single variable $P(t)$ and
can be solved efficiently. The full solution is provided in Appendix E.

\subsection{Performance Theorem}
\label{section:performance_theorem}

We first define an upper bound $V_{max}$ on the maximum value that $V$ can take in our algorithm.
\begin{align}
V_{max} \defequiv \frac{Y_{max} - Y_{min} - R_{max} - D_{max}}{\chi_{min} - C_{min}}
\label{eq:v_max}
\end{align}
Then we have the following result.
\begin{thm} (Algorithm Performance) Suppose the initial battery charge level $Y_{init}$ satisfies
$Y_{min} \leq Y_{init} \leq Y_{max}$. Then implementing the algorithm above 
with any fixed parameter $V$ such that $0 < V \leq V_{max}$ for all $t \in \{0, 1, 2, \ldots\}$ results in the
following performance guarantees:
\begin{enumerate}
\item The queue $X(t)$ is deterministically upper and lower bounded for all $t$ as follows:
\begin{align}
-V \chi_{min} - D_{max} \leq X(t) &\leq  Y_{max} - Y_{min} \nonumber \\
&\;\;\;\; - D_{max}  - V \chi_{min}
\label{eq:x_bounds}
\end{align}
\item The actual battery level $Y(t)$ satisfies $Y_{min} \leq Y(t) \leq Y_{max}$ for all $t$.
\item  All control decisions are feasible.
\item If $W(t)$ and $S(t)$ are i.i.d. over slots, then the time-average cost under the dynamic algorithm is within $B/V$ of the optimal value:
\begin{align}
\lim_{t\rightarrow\infty} \frac{1}{t} \sum_{\tau = 0}^{t-1}& \expect{P(\tau) C(\tau) + 1_R(\tau) C_{rc} + 1_D(\tau) C_{dc}} \nonumber \\
&\qquad \qquad \leq \phi_{opt} + B/V
\label{eq:utility_bound}
\end{align}
where $B$ is a constant given by $B = \frac{\max[R^2_{max}, D^2_{max}]}{2}$ 
and $\phi_{opt}$ is the optimal solution to \textbf{P1} under \emph{any} feasible control algorithm (possibly
with knowledge of future events). 
\end{enumerate}
\label{thm:one}
\end{thm}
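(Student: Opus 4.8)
The plan is to establish the four claims in sequence: claims 1--3 are deterministic and follow from Lemma~\ref{lem:two} together with the identity $Y(t) = X(t) + V\chi_{min} + D_{max} + Y_{min}$ coming from (\ref{eq:x_t_def}), while claim~4 is a drift-plus-penalty argument comparing the dynamic algorithm against the stationary randomized policy of Lemma~\ref{lem:one}. \emph{For claims 1 and 2}, I would prove (\ref{eq:x_bounds}) by induction on $t$. The base case holds since $X(0) = Y_{init} - V\chi_{min} - D_{max} - Y_{min}$ with $Y_{min}\le Y_{init}\le Y_{max}$. For the lower bound in the inductive step: if $X(t) \ge -V\chi_{min}$ then $X(t+1) \ge X(t) - D_{max} \ge -V\chi_{min} - D_{max}$; if $X(t) < -V\chi_{min}$ then part~2 of Lemma~\ref{lem:two} forces $D^*(t)=0$, so $X(t+1) = X(t) + R^*(t) \ge X(t)$ is still at least $-V\chi_{min} - D_{max}$. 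For the upper bound: if $X(t) > -VC_{min}$ then part~1 of Lemma~\ref{lem:two} forces $R^*(t)=0$, so $X(t+1)\le X(t)$ retains the bound; if $X(t)\le -VC_{min}$ then $X(t+1)\le X(t)+R_{max}\le -VC_{min}+R_{max}$, and this is $\le Y_{max}-Y_{min}-D_{max}-V\chi_{min}$ exactly because $V\le V_{max}$ (the inequality rearranges to $V(\chi_{min}-C_{min})\le Y_{max}-Y_{min}-R_{max}-D_{max}$, which is (\ref{eq:v_max}); here $\chi_{min}>C_{min}$ is used). Substituting (\ref{eq:x_bounds}) into $Y(t)=X(t)+V\chi_{min}+D_{max}+Y_{min}$ then gives $Y_{min}\le Y(t)\le Y_{max}$, which is claim~2.

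\emph{For claim 3}, the construction of $R^*(t),D^*(t)$ already yields $0\le R^*(t)\le R_{max}$, $0\le D^*(t)\le D_{max}$, and constraints (\ref{eq:w_t}), (\ref{eq:rt_dt}), (\ref{eq:p_peak}); it remains to verify the overflow/underflow bounds (\ref{eq:rt_bound}), (\ref{eq:dt_bound}). If $R^*(t)>0$, the contrapositive of part~1 of Lemma~\ref{lem:two} gives $X(t)\le -VC_{min}$, so $Y(t)\le V(\chi_{min}-C_{min})+D_{max}+Y_{min}\le Y_{max}-R_{max}$ (using $V\le V_{max}$ again), whence $R^*(t)\le R_{max}\le Y_{max}-Y(t)$. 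If $D^*(t)>0$, the contrapositive of part~2 gives $X(t)\ge -V\chi_{min}$, so $Y(t)\ge D_{max}+Y_{min}$, whence $D^*(t)\le D_{max}\le Y(t)-Y_{min}$.

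\emph{For claim 4}, set $L(t)=\frac{1}{2}X(t)^2$ and $\Delta(t)=L(t+1)-L(t)$. Squaring (\ref{eq:x_t}) and using (\ref{eq:rt_dt}) (so $(R^*(t)-D^*(t))^2\le\max[R_{max}^2,D_{max}^2]$) gives $\Delta(t)\le B + X(t)(R^*(t)-D^*(t))$ with $B=\frac{\max[R_{max}^2,D_{max}^2]}{2}$, and by (\ref{eq:w_t}), $R^*(t)-D^*(t)=P^*(t)-W(t)$. Adding $V$ times the slot-$t$ cost and taking conditional expectation given $X(t)$,
\begin{align*}
&\Delta(t)+V\,\expect{P(t)C(t)+1_R(t)C_{rc}+1_D(t)C_{dc}\mid X(t)} \\
&\qquad \le B - X(t)\,\expect{W(t)} + \expect{\, X(t)P(t)+V[P(t)C(t)+1_R(t)C_{rc}+1_D(t)C_{dc}]\mid X(t)}.
\end{align*}
The quantity inside the last expectation is exactly the objective of \textbf{P3}, which the dynamic algorithm minimizes over all decisions feasible for \textbf{P3}; hence it is no larger than the value attained by the stationary randomized policy of Lemma~\ref{lem:one}, whose decisions depend only on $(W(t),S(t))$ and are therefore independent of $X(t)$. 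Then $\expect{P^{stat}(t)}=\expect{W(t)}$ by (\ref{eq:w_t}) and (\ref{eq:stat1}), so $\expect{X(t)P^{stat}(t)\mid X(t)}=X(t)\expect{W(t)}$ cancels the $-X(t)\expect{W(t)}$ term, while (\ref{eq:stat2}) gives $\expect{P^{stat}(t)C(t)+1_R^{stat}(t)C_{rc}+1_D^{stat}(t)C_{dc}}=\hat\phi$. This yields $\Delta(t)+V\,\expect{P(t)C(t)+1_R(t)C_{rc}+1_D(t)C_{dc}\mid X(t)}\le B + V\hat\phi$. Taking expectations, summing over $\tau\in\{0,\dots,t-1\}$, telescoping, dropping the nonnegative $\expect{L(t)}$, dividing by $Vt$, and letting $t\to\infty$ gives a time-average cost bounded by $\hat\phi+B/V$; since $\hat\phi\le\phi_{opt}$ (established just before Lemma~\ref{lem:one}), this is (\ref{eq:utility_bound}). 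The main obstacle is this last step: one must check that the Lemma~\ref{lem:one} policy is admissible for \textbf{P3} --- it satisfies (\ref{eq:w_t}), (\ref{eq:rt_dt}), (\ref{eq:rt_dt_bound}), (\ref{eq:p_peak}) but \emph{not} the finite-battery constraints, which is harmless because \textbf{P3} does not impose them --- and that $(W(t),S(t))$ is independent of $X(t)$ so the terms linear in $X(t)$ cancel. Everything else is bookkeeping with Lemma~\ref{lem:two} and the shift (\ref{eq:x_t_def}).
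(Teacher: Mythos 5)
Your proposal is correct and follows essentially the same route as the paper: the same induction on $X(t)$ using the two cases from Lemma~\ref{lem:two} and the definition of $V_{max}$, the same shift identity to get the battery bounds, and the same drift-plus-penalty comparison against the stationary randomized policy of Lemma~\ref{lem:one} followed by telescoping and $\hat\phi\le\phi_{opt}$. Your explicit verification of (\ref{eq:rt_bound}) and (\ref{eq:dt_bound}) in claim~3 is slightly more detailed than the paper's one-line remark, but the substance is identical.
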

Theorem \ref{thm:one} part $4$ shows that by choosing larger $V$, the time-average cost under the dynamic algorithm can be pushed
 closer to the minimum possible value $\phi_{opt}$. However, $V_{max}$ limits how large $V$ can be chosen.


\subsection{Proof of Theorem \ref{thm:one}}
\label{section:performance_analysis}

Here we prove Theorem \ref{thm:one}. 
\begin{proof} (Theorem \ref{thm:one} part $1$) 
We first show that (\ref{eq:x_bounds}) holds for $t=0$. We have that 
\begin{align}
Y_{min} \leq Y(0) = Y_{init} \leq Y_{max}
\label{eq:Y_0}
\end{align}
Using the definition  
(\ref{eq:x_t_def}), we have that $Y(0) = X(0) + V \chi_{min} + D_{max} + Y_{min}$. Using this in (\ref{eq:Y_0}), we get:
\begin{align*}
Y_{min} \leq X(0) + V \chi_{min} + D_{max} + Y_{min} \leq Y_{max}
\end{align*}
This yields 
\begin{align*}
-V \chi_{min} - D_{max} \leq X(0) &\leq Y_{max} - Y_{min} - D_{max} \\
&\;\;\;\; - V \chi_{min}
\end{align*}

Now suppose (\ref{eq:x_bounds}) holds for slot $t$.  We will show that it also holds for slot $t+1$. First, suppose $-V C_{min} < X(t) \leq 
Y_{max} - Y_{min} - D_{max} - V \chi_{min}$. Then,
from Lemma \ref{lem:two}, we have that $R^*(t) = 0$. Thus, using (\ref{eq:x_t}), we have that $X(t+1) \leq X(t) \leq 
Y_{max} - Y_{min} - D_{max} - V \chi_{min}$. 
Next, suppose $X(t) \leq -V C_{min}$. Then, the maximum possible increase is $R_{max}$ so that $X(t+1) \leq -V C_{min} + R_{max}$. 
Now for all $V$ such that $0 < V \leq V_{max}$, we have that 
$-V C_{min} + R_{max} \leq Y_{max} - Y_{min} - D_{max} - V \chi_{min}$. 
This follows from the definition (\ref{eq:v_max}) and the fact that $\chi_{min} \geq C_{min}$. 
Thus, we have $X(t+1) \leq Y_{max} - Y_{min} - D_{max} - V \chi_{min}$.


Next, suppose $-V\chi_{min} - D_{max} \leq X(t) < -V \chi_{min}$. Then, from Lemma \ref{lem:two}, we have that $D^*(t) = 0$. 
Thus, using (\ref{eq:x_t}) we have that $X(t+1) \geq X(t) \geq -V \chi_{min} - D_{max}$. Next, suppose $-V \chi_{min} \leq X(t)$. 
Then the maximum possible decrease is $D_{max}$ so that $X(t+1) \geq -V \chi_{min} - D_{max}$ for this case as well. This shows that
$X(t+1) \geq -V \chi_{min} - D_{max}$. Combining these two bounds proves (\ref{eq:x_bounds}).
\end{proof}
\begin{proof} (Theorem \ref{thm:one} parts $2$ and $3$) Part $2$ directly follows from (\ref{eq:x_bounds}) and (\ref{eq:x_t_def}). Using
$Y(t) = X(t) + V \chi_{min} + D_{max} + Y_{min}$ in the lower bound in (\ref{eq:x_bounds}), we have:
$-V \chi_{min} - D_{max} \leq Y(t) - V \chi_{min} - D_{max} - Y_{min}$, i.e.,  
$Y_{min} \leq Y(t)$. 
Similarly, using $Y(t) = X(t) + V \chi_{min} + D_{max} + Y_{min}$ in the upper bound in (\ref{eq:x_bounds}), we have:
$Y(t) - V \chi_{min} - D_{max} - Y_{min} \leq Y_{max} - Y_{min} - D_{max} - V \chi_{min}$, i.e.,
$Y(t) \leq Y_{max}$.

Part $3$ now follows from part $2$ and the constraint on $P(t)$ in \textbf{P3}. 
\end{proof}
\begin{proof} (Theorem \ref{thm:one} part $4$) We make use of the technique of Lyapunov optimization 
to show (\ref{eq:utility_bound}).
We start by defining a Lyapunov function as a scalar measure of congestion in the system.
Specifically, we define the following Lyapunov function: $L(X(t)) \defequiv \frac{1}{2} X^2(t)$. 
Define the conditional $1$-slot Lyapunov drift as follows:
\begin{align}
\Delta(X(t)) \defequiv \expect{L(X(t+1)) - L(X(t))|X(t)}
\label{eq:drift1}
\end{align}
Using (\ref{eq:x_t}), $\Delta(X(t))$ can be bounded as follows (see Appendix B for details):
\begin{align}
\Delta(X(t)) \leq B - X(t)\expect{D(t) - R(t)|X(t)}
\label{eq:drift2}
\end{align}
where $B = \frac{\max[R^2_{max}, D^2_{max}]}{2}$. Following the Lyapunov optimization framework, we add to
both sides of (\ref{eq:drift2}) the penalty term $V \expect{P(t)C(t) + 1_R(t)C_{rc} + 1_D(t)C_{dc}|X(t)}$ to
get the following:
\begin{align}
&\Delta(X(t))+ V \expect{P(t)C(t) + 1_R(t)C_{rc} + 1_D(t)C_{dc}|X(t)}  \nonumber \\
&\qquad \leq B - X(t)\expect{D(t) - R(t)|X(t)} \nonumber\\ 
&\qquad\;\;+V \expect{P(t)C(t) + 1_R(t)C_{rc} + 1_D(t)C_{dc}|X(t)} 
\label{eq:drift3}
\end{align}
Using the relation $W(t) = P(t) - R(t) + D(t)$, we can rewrite the above as:
\begin{align}
&\Delta(X(t))+ V \expect{P(t)C(t) + 1_R(t)C_{rc} + 1_D(t)C_{dc}|X(t)}  \leq \nonumber \\
&\qquad B - X(t)\expect{W(t)|X(t)} + X(t)\expect{P(t)|X(t)} \nonumber\\ 
&\qquad + V \expect{P(t)C(t) + 1_R(t)C_{rc} + 1_D(t)C_{dc}|X(t)} 
\label{eq:drift4}
\end{align}
Comparing this with \textbf{P3}, it can be seen that given any queue value $X(t)$,
our control algorithm is designed to \emph{minimize} the
right hand side of (\ref{eq:drift4}) over all possible feasible control policies. This includes the
optimal, stationary, randomized policy given in Lemma $\ref{lem:one}$.
Then, plugging the control decisions corresponding to the stationary, randomized policy, it can be shown that:
\begin{align*}
&\Delta(X(t))+ V \expect{P(t)C(t) + 1_R(t)C_{rc} + 1_D(t)C_{dc}|X(t)}  \leq \nonumber \\
& B + V \expect{P^{stat}(t)C^{stat}(t) + 1^{stat}_R(t)C_{rc} + 1^{stat}_D(t)C_{dc}|X(t)} \nonumber \\
& = B + V \hat{\phi} \leq B + V \phi_{opt}
\end{align*}
Taking the expectation of both sides and using the law of iterated expectations and summing over $t \in \{0, 1, 2, \ldots, T-1\}$, we get:
\begin{align*}
&\sum_{t=0}^{T-1} V \expect{P(t)C(t) + 1_R(t)C_{rc} + 1_D(t)C_{dc}} \leq \\
& BT + VT \phi_{opt} - \expect{L(X(T))} + \expect{L(X(0))}
\end{align*}
Dividing both sides by $VT$ and taking limit as $T \rightarrow\infty$ yields:
\begin{align*}
\lim_{T\rightarrow\infty} \frac{1}{T} \sum_{t = 0}^{T-1}& \expect{P(t) C(t) + 1_R(t) C_{rc} + 1_D(t) C_{dc}} \leq \phi_{opt} + B/V
\end{align*}
where we have used the fact that $\expect{L(X(0))}$ is finite and that $\expect{L(X(T))}$ is non-negative.
\end{proof}

\section{Extensions to Basic Model}
\label{section:ext_model}

In this section, we extend the basic model of Sec. \ref{section:basic_model} to the case where portions 
of the workload are delay-tolerant in the sense they can be postponed by a certain amount without affecting 
the utility the data center derives from executing them. We refer to such postponement as buffering the workload.
Specifically, we assume that the total workload consists of both delay tolerant and delay intolerant components. Similar to the workload in the 
basic model, the delay intolerant workload cannot be buffered and must be served immediately. However, the delay tolerant component may be
buffered and served later. As an example, data centers run virus scanning programs on most of their servers routinely 
(say once per day). As long as a virus scan is executed once a day, their purpose is served - it does not matter what time of the 
day is chosen for this. 
The ability to delay some of the workload gives more opportunities to reduce the average power cost in addition to using the battery.
We assume that our data center has system mechanisms to implement such buffering of specified workloads.


In the following, we will denote the total workload generated in slot $t$ by $W(t)$. This consists of the delay tolerant and intolerant components 
denoted by $W_1(t)$ and $W_2(t)$ respectively, so that $W(t) = W_1(t) + W_2(t)$ for all $t$. Similar to the basic model, we use $P(t), R(t), D(t)$
to denote the total power drawn from the grid, the total power used to recharge the battery and the total power discharged from the battery in slot $t$,
respectively. Thus, the total amount available to serve the workload is given by $P(t) - R(t) + D(t)$. Let $\gamma(t)$ denote the fraction of this
that is used to serve the delay tolerant workload in slot $t$.
Then the amount used to serve the delay intolerant workload is $(1-\gamma(t))(P(t) - R(t) + D(t))$.
Note that the following constraint must be satisfied every slot:
\begin{align}
0 \leq \gamma(t) \leq 1
\label{eq:gamma_t}
\end{align}

\begin{figure}
\centering
\includegraphics[height=4cm, width=7.5cm,angle=0]{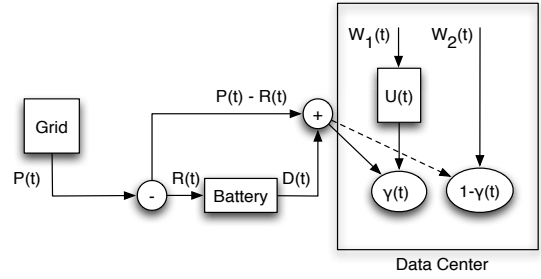}
\caption{Block diagram for the extended model with delay tolerant and delay intolerant workloads.}
\label{fig:ext_model}
\end{figure}

We next define $U(t)$ as the unfinished work for the delay tolerant workload in slot $t$. The dynamics for $U(t)$ can be expressed as:
\begin{align}
U(t+1) = \max[U(t) - \gamma(t)(P(t) - R(t) + D(t)), 0] + W_1(t)
\label{eq:u_t}
\end{align}
For the delay intolerant workload, there are no such queues since all incoming workload must be served in the same slot. This means:
\begin{align}
W_2(t) = (1- \gamma(t))(P(t) - R(t) + D(t))
\label{eq:w2_t}
\end{align}
The block diagram for this extended model is shown in Fig. \ref{fig:ext_model}.
Similar to the basic model, we assume that for $i=1, 2$, $W_i(t)$ varies randomly in an i.i.d. fashion, taking values from a set $\mathcal{W}_i$ of
non-negative values. 
We assume that $W_1(t) + W_2(t) \leq W_{max}$ for all $t$. 
We also assume that $W_1(t) \leq W_{1,max} < W_{max}$ and $W_2(t) \leq W_{2, max} < W_{max}$ for all $t$.
We further assume that $P_{peak} \geq W_{max} + \max[R_{max}, D_{max}]$. 
We use the same model for battery and unit cost as in Sec. \ref{section:basic_model}.

Our objective is to minimize the time-average cost subject to meeting all the constraints (such as finite battery size and (\ref{eq:w2_t}))
and ensuring finite average delay for the delay tolerant workload. This can be stated as:
\begin{align}
& \textrm{\bf{P4}}: \nonumber\\
&\textrm{Minimize:} \;\;  \lim_{t\rightarrow\infty} \frac{1}{t} \sum_{\tau = 0}^{t-1} \expect{P(\tau) C(\tau) + 1_R(\tau) C_{rc} + 1_D(\tau) C_{dc}} \nonumber\\
&\textrm{Subject to:}\;\; \textrm{Constraints} \; (\ref{eq:rt_dt}), (\ref{eq:rt_dt_bound}), (\ref{eq:rt_bound}), (\ref{eq:dt_bound}), (\ref{eq:p_peak}), 
(\ref{eq:gamma_t}), (\ref{eq:w2_t}) \nonumber\\
& \qquad \qquad \;\;\;\;\; \textrm{Finite average delay for $W_1(t)$} \nonumber
\end{align}
Similar to the basic model, we consider the following \emph{relaxed} problem:
\begin{align}
& \textrm{\bf{P5}}: \nonumber\\
&\textrm{Minimize:} \;\;  \lim_{t\rightarrow\infty} \frac{1}{t} \sum_{\tau = 0}^{t-1} \expect{P(\tau) C(\tau) + 1_R(\tau) C_{rc} + 1_D(\tau) C_{dc}} \nonumber\\
&\textrm{Subject to:}\;\; \textrm{Constraints} \; (\ref{eq:rt_dt}), (\ref{eq:rt_dt_bound}), (\ref{eq:p_peak}), (\ref{eq:gamma_t}), (\ref{eq:w2_t}) \nonumber\\
& \qquad \qquad \;\;\;\;\; \overline{R} = \overline{D} \label{eq:27} \\
& \qquad \qquad \;\;\;\;\; \overline{U} < \infty \label{eq:28}
\end{align}
where $\overline{U}$ is the time average expected queue backlog for the delay tolerant workload and is defined as:
\begin{align}
\overline{U} \defequiv \limsup_{t\rightarrow \infty} \frac{1}{t} \sum_{\tau=0}^{t-1} \expect{U(\tau)}
\label{eq:u_bar}
\end{align}
Let ${\phi}_{ext}$ and $\hat{\phi}_{ext}$ denote the optimal value for problems \textbf{P4} and \textbf{P5} respectively.
Since \textbf{P5} is less constrained than \textbf{P4}, we have that $\hat{\phi}_{ext} \leq {\phi}_{ext}$.
Similar to Lemma $\ref{lem:one}$, the following holds:

\begin{lem} (Optimal Stationary, Randomized Policy): If the workload process $W_1(t), W_2(t)$ and auxiliary process $S(t)$ are i.i.d. over slots, then there
exists a stationary, randomized policy that takes control decisions $\hat{P}(t), \hat{R}(t), \hat{D}(t), \hat{\gamma}(t)$
every slot purely as a function (possibly randomized) of the current state $(W_1(t), W_2(t), S(t))$ while satisfying the 
constraints (\ref{eq:w2_t}), (\ref{eq:rt_dt}), (\ref{eq:rt_dt_bound}), (\ref{eq:p_peak}), (\ref{eq:gamma_t}) and providing the following guarantees:
\begin{align}
&\expect{\hat{R}(t)} = \expect{\hat{D}(t)} \label{eq:stat1_ext} \\
&\expect{\hat{\gamma}(t)(\hat{P}(t) - \hat{R}(t) + \hat{D}(t))} \geq \expect{W_1(t)} \label{eq:stat2_ext}\\
&\expect{\hat{P}(t) \hat{C}(t) + \hat{1}_{{R}}(t) C_{rc} + \hat{1}_{{D}}(t) C_{dc}} = \hat{\phi}_{ext} \label{eq:stat3_ext}
\end{align}
where the expectations above are with respect to the stationary distribution of $(W_1(t), W_2(t), S(t))$ and the randomized control decisions.
\label{lem:three}
\end{lem}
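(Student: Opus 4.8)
The plan is to follow the same route as Lemma~\ref{lem:one}, invoking the theory of stationary randomized policies for i.i.d.\ stochastic network optimization developed in \cite{neely-NOW, neely_new}, with the extra work concentrated on the delay-tolerant queue $U(t)$. The first step is to recast \textbf{P5} in a purely time-average form. The only non-per-slot constraints in \textbf{P5} are $\overline{R}=\overline{D}$, which is already a time-average equality, and $\overline{U}<\infty$. Using the dynamics (\ref{eq:u_t}) and the boundedness of $W_1(t)$ and of the service term $\gamma(t)(P(t)-R(t)+D(t))$, the increments of $U(t)$ are bounded, so for a given policy $\overline{U}<\infty$ holds if and only if the queue $U$ is rate stable, and this is guaranteed by the time-average rate condition $\expect{\gamma(t)(P(t)-R(t)+D(t))}\ge\expect{W_1(t)}$. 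Conversely, summing (\ref{eq:u_t}) over a horizon, dividing by $t$, using $U(t)\ge0$ and $\overline{U}<\infty$, and letting $t\to\infty$ shows that any policy feasible for \textbf{P5} must satisfy this rate inequality on time averages. Hence $\hat{\phi}_{ext}$ equals the optimal time-average cost over all causal policies obeying the per-slot constraints (\ref{eq:w2_t}), (\ref{eq:rt_dt}), (\ref{eq:rt_dt_bound}), (\ref{eq:p_peak}), (\ref{eq:gamma_t}) together with the two time-average constraints $\overline{R}=\overline{D}$ and $\expect{\gamma(t)(P(t)-R(t)+D(t))}\ge\expect{W_1(t)}$.

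The second step is the standard achievable-region argument. For each realization $(w_1,w_2,s)$ of the i.i.d.\ state the feasible per-slot actions $(P,R,D,\gamma)$ form a fixed bounded set, and the induced metric vector $\big(PC(s,P)+1_R C_{rc}+1_D C_{dc},\ R,\ D,\ \gamma(P-R+D)-w_1\big)$ ranges over a bounded set $M(w_1,w_2,s)$. Because $\mathcal{W}_1\times\mathcal{W}_2\times\mathcal{S}$ is finite, the set of time-average metric vectors achievable by stationary randomized policies (actions a function only of the current state) is exactly the convex region $\Lambda$ obtained by averaging $\mathrm{conv}(M(w_1,w_2,s))$ against the state distribution; by Carath\'eodory's theorem and the i.i.d.\ law of large numbers the closure of the region achievable by \emph{arbitrary} causal policies coincides with $\overline{\Lambda}$. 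The point corresponding to the optimum of \textbf{P5} --- cost $\hat{\phi}_{ext}$, third coordinate equal to the second, fourth coordinate nonnegative --- lies in $\overline{\Lambda}$, and since this region is built from finitely many bounded sets the optimum is attained by the corresponding stationary randomized policy $(\hat{P}(t),\hat{R}(t),\hat{D}(t),\hat{\gamma}(t))$, which then satisfies (\ref{eq:stat1_ext}), (\ref{eq:stat2_ext}), and (\ref{eq:stat3_ext}) by construction.

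I expect the main obstacle to be the first step: rigorously reducing $\overline{U}<\infty$ to the rate inequality (\ref{eq:stat2_ext}) without altering the optimal value $\hat{\phi}_{ext}$. The ``only if'' direction needs care with the $\max[\cdot,0]$ in (\ref{eq:u_t}); one lower-bounds $U(t)-U(0)$ by $\sum_{\tau=0}^{t-1}\big(W_1(\tau)-\gamma(\tau)(P(\tau)-R(\tau)+D(\tau))\big)$, divides by $t$, and uses $U(t)\ge0$ and $\overline{U}<\infty$ to kill the boundary terms in the limit. The ``if'' direction uses that a rate-stable policy driven by bounded i.i.d.\ input has finite time-average backlog; the non-strict inequality in (\ref{eq:stat2_ext}) is all that is required here, the strict $\epsilon$-slack version being postponed to the Lyapunov drift analysis of the extended-model performance theorem. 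A secondary technicality is that $C(s,\cdot)$ need not be continuous, so attainment of the per-state minima is not automatic; exactly as in Lemma~\ref{lem:one} this is absorbed into the framework of \cite{neely-NOW, neely_new} (working with closures, or restricting to a sufficiently rich finite action set), and the remainder is a direct transcription of the finite-state i.i.d.\ stationary-policy theorem.
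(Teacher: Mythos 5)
Your proposal is correct and follows essentially the same route as the paper: the paper's own proof of this lemma is simply the statement that it ``follows from the framework in \cite{neely-NOW, neely_new}'' (exactly as for Lemma \ref{lem:one}), and your argument is the standard unpacking of that framework --- reducing $\overline{U}<\infty$ to the time-average rate inequality and then invoking the achievable-region/Carath\'eodory characterization of stationary randomized policies for finite i.i.d.\ states. Your closing caveats (the non-strict inequality in (\ref{eq:stat2_ext}) and the possible non-attainment for discontinuous $C(s,\cdot)$) are precisely the technicalities the cited framework handles, and only the direction ``stationary optimum $\leq \hat{\phi}_{ext}$'' is actually used later in the drift analysis, so nothing is missing.
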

\begin{proof} This result follows from the framework in \cite{neely-NOW, neely_new} and is omitted for brevity.
\end{proof}
The condition (\ref{eq:stat2_ext}) only guarantees queueing stability, not bounded worst case delay. 
We will now design a dynamic control algorithm that will yield bounded worst case delay while guaranteeing average cost that is within
$O(1/V)$ of $\hat{\phi}_{ext}$.

\subsection{Delay-Aware Queue}
\label{section:delay_aware_queue}

In order to provide worst case delay guarantees to the delay tolerant workload, we will make use of the technique of
$\epsilon$-persistent queue \cite{neely_smartgrid}. Specifically, we define a virtual queue $Z(t)$ as follows:
\begin{align}
Z(t+1) = [&Z(t) - \gamma(t)(P(t) - R(t) + D(t)) + \epsilon 1_{U(t)}]^+ 
\label{eq:z_t}
\end{align}
where $\epsilon > 0$ is a parameter to be specified later, $1_{U(t)}$ is an indicator variable that is $1$ if $U(t) > 0$ and $0$ else, and
$[x]^+ = \max[x, 0]$. The objective of this virtual queue is to enable the provision of worst-case
delay guarantee on any buffered workload $W_1(t)$. Specifically, if any control algorithm ensures that $U(t) \leq U_{max}$ and $Z(t) \leq Z_{max}$
for all $t$, then the worst case delay can be bounded. This is shown in the following:

\begin{lem} (Worst Case Delay)
Suppose a control algorithm ensures that $U(t) \leq U_{max}$ and $Z(t) \leq Z_{max}$
for all $t$, where $U_{max}$ and $Z_{max}$ are some positive constants. 
Then the worst case delay for any delay tolerant workload is at most $\delta_{max}$ slots where:
\begin{align}
\delta_{max} \defequiv \lceil (U_{max} + Z_{max})/\epsilon \rceil
\label{eq:delta_max}
\end{align}
\label{lem:worst_case_delay}
\end{lem}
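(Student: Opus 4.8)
The plan is to argue by contradiction, exploiting the $\epsilon$-persistent virtual queue $Z(t)$ as a ``clock'' that forces enough service to be devoted to old workload. Fix an arbitrary slot $t_0$ and suppose, for contradiction, that some portion of the delay-tolerant workload $W_1(t_0)$ arriving in slot $t_0$ is still unserved at the end of slot $t_0 + \delta_{max}$ (we assume, as is implicit in the model, that buffered workload is served in FIFO order). Write $\mu(\tau) \defequiv \gamma(\tau)(P(\tau) - R(\tau) + D(\tau))$ for the amount of service granted to the delay-tolerant queue in slot $\tau$.

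First I would observe that, under the contradiction hypothesis, at the start of every slot $\tau \in \{t_0+1, \dots, t_0+\delta_{max}\}$ there is unfinished delay-tolerant work present (the leftover from $W_1(t_0)$), so $U(\tau) > 0$ and hence $1_{U(\tau)} = 1$ for all $\delta_{max}$ of these slots. Next I would drop the $[\cdot]^+$ in (\ref{eq:z_t}) to get $Z(\tau+1) \ge Z(\tau) - \mu(\tau) + \epsilon 1_{U(\tau)}$, sum this over the window $\tau = t_0+1, \dots, t_0+\delta_{max}$, telescope, and use $Z(\tau) \ge 0$ together with $Z(t_0+\delta_{max}+1) \le Z_{max}$. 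This yields the lower bound
\[
\sum_{\tau = t_0+1}^{t_0+\delta_{max}} \mu(\tau) \ \ge \ \epsilon\,\delta_{max} - Z_{max}.
\]

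The complementary step is an upper bound on the same sum. Since service is FIFO and, by hypothesis, the leftover of $W_1(t_0)$ — hence all delay-tolerant work that arrived no later than $t_0$ — is never fully cleared during the window, every unit of service $\mu(\tau)$ for $\tau \in \{t_0+1,\dots,t_0+\delta_{max}\}$ is applied to work already present at the start of slot $t_0+1$; that backlog is exactly $U(t_0+1) \le U_{max}$. Were the total service over the window at least $U(t_0+1)$, this backlog (and in particular the leftover of $W_1(t_0)$) would be fully served by the end of slot $t_0+\delta_{max}$, contradicting the hypothesis; hence $\sum_{\tau} \mu(\tau) < U(t_0+1) \le U_{max}$. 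Combining the two bounds gives $\epsilon\,\delta_{max} < U_{max} + Z_{max}$, i.e. $\delta_{max} < (U_{max}+Z_{max})/\epsilon$, which contradicts $\delta_{max} = \lceil (U_{max}+Z_{max})/\epsilon \rceil \ge (U_{max}+Z_{max})/\epsilon$. Therefore every bit of $W_1(t_0)$ departs within $\delta_{max}$ slots, which is the claim.

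The step I expect to require the most care is the bookkeeping in the upper bound: making precise that \emph{all} service granted during the window $\{t_0+1,\dots,t_0+\delta_{max}\}$ is spent on workload present by slot $t_0$. This relies on the FIFO assumption and on tracking exactly when arrivals $W_1(t)$ enter the backlog $U(\cdot)$ in the recursion (\ref{eq:u_t}) versus when they become eligible for service, so that the off-by-one between the arrival slot $t_0$ and the service window is handled consistently. By contrast, the $Z(t)$ lower-bound step is a routine telescoping argument once the indicator values over the window are pinned down.
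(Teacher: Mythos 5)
Your proposal is correct and follows essentially the same argument as the paper: the same contradiction setup, the same telescoping of the $Z(t)$ recursion over the window to lower-bound the total service by $\epsilon\,\delta_{max} - Z_{max}$, and the same FIFO/backlog observation to upper-bound it by $U_{max}$ (your use of $U(t_0+1)\leq U_{max}$ is a negligible sharpening of the paper's direct use of $U_{max}$). No changes needed.
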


\begin{proof} 
Consider a new arrival $W_1(t)$ in any slot $t$. We will show that this is served on or before time $t+\delta_{max}$.
We argue by contradiction. Suppose this workload is not served by $t+\delta_{max}$. Then for all slots $\tau \in \{t+1, t+2, \ldots, t + \delta_{max}\}$,
it must be the case that $U(\tau) > 0$ (else $W_1(t)$ would have been served before $\tau$). This implies that $1_{U(\tau)} = 1$ and using (\ref{eq:z_t}),
we have: 
\begin{align*}
Z(\tau + 1) \geq Z(\tau) - \gamma(\tau)(P(\tau) - R(\tau) + D(\tau)) + \epsilon
\end{align*}
Summing for all $\tau \in \{t+1, t+2, \ldots, t + \delta_{max}\}$, we get:
\begin{align*}
Z(t + \delta_{max} + 1) &- Z(t+1) \geq \delta_{max}\epsilon \\
&- \sum_{\tau = t+1}^{t + \delta_{max}} [\gamma(\tau)(P(\tau) - R(\tau) + D(\tau))]
\end{align*}
Using the fact that $Z(t + \delta_{max} + 1) \leq Z_{max}$ and $Z(t+1) \geq 0$, we get:
\begin{align}
\sum_{\tau = t+1}^{t + \delta_{max}} [\gamma(\tau)(P(\tau) - R(\tau) + D(\tau))] \geq \delta_{max}\epsilon - Z_{max}
\label{eq:32}
\end{align}
Note that by (\ref{eq:u_t}), $W_1(t)$ is part of the backlog $U(t+1)$. Since $U(t+1) \leq U_{max}$ and since the service is FIFO, it will be
served on or before time $t+\delta_{max}$ whenever at least $U_{max}$ units of power is used to serve the delay tolerant workload during the
interval $(t+1, \ldots, t+\delta_{max})$. Since we have assumed that $W_1(t)$ is not served by $t+\delta_{max}$, it must be the case that 
$\sum_{\tau = t+1}^{t + \delta_{max}} [\gamma(\tau)(P(\tau) - R(\tau) + D(\tau))] < U_{max}$. Using this in (\ref{eq:32}), we have:
\begin{align*}
U_{max} > \delta_{max}\epsilon - Z_{max}
\end{align*}
This implies that $\delta_{max} < (U_{max} + Z_{max})/\epsilon$, that contradicts the definition of $\delta_{max}$ in (\ref{eq:delta_max}).
\end{proof}

In Sec. \ref{section:performance_theorem_ext}, we will show that there are indeed constants $U_{max}, Z_{max}$
such that the dynamic algorithm ensures that $U(t) \leq U_{max}, Z(t) \leq Z_{max}$ for all $t$.

\subsection{Optimal Control Algorithm}
\label{section:optimal_algorithm_ext}

We now present an online control algorithm that approximately solves \textbf{P4}. 
Similar to the algorithm for the basic model, 
this algorithm also makes use of the following queueing state variable $X(t)$ to track the battery charge level and is defined as follows:
\begin{align}
X(t) = Y(t) - Q_{max} - D_{max} - Y_{min}
\label{eq:x_t_def_ext}
\end{align}
where $Q_{max}$ is a constant to be specified in (\ref{eq:q_max}).
Recall that $Y(t)$ denotes the actual battery charge level in slot $t$ and evolves according to
(\ref{eq:y_t}). It can be seen that $X(t)$ is simply a shifted version of $Y(t)$ and its dynamics is given by: 
\begin{align}
X(t+1) = X(t) - D(t) + R(t)
\label{eq:x_t_ext}
\end{align}
We will show that this definition enables our algorithm to ensure that the constraint (\ref{eq:y_t_bound}) is met. We are now ready to state the
dynamic control algorithm. Let $(W_1(t), W_2(t), S(t))$ be the system state in slot $t$.
Define $\boldsymbol{Q}(t) \defequiv (U(t), Z(t), X(t))$ as the queue state that includes the workload queue as well as auxiliary queues.
Then the dynamic algorithm chooses control decisions $P(t), R(t), D(t)$ and $\gamma(t)$ as the 
solution to the following optimization problem:
\begin{align}
& \textrm{\bf{P6}}: \nonumber\\
&\textrm{Max:} [U(t)+Z(t)]P(t) - V \Big[P(t) {C}(t) + 1_R(t)C_{rc} + 1_D(t) C_{dc}\Big] \nonumber\\
& \qquad \qquad \;\;\;\; + [X(t) + U(t) + Z(t)](D(t) - R(t)) \nonumber\\
&\textrm{Subject to:} \; \textrm{Constraints} \; (\ref{eq:gamma_t}) , (\ref{eq:w2_t}), (\ref{eq:rt_dt}), (\ref{eq:rt_dt_bound}), (\ref{eq:p_peak}) \nonumber 
\end{align}
where $V > 0$ is a control parameter that affects the distance from optimality.
Let $P^*(t), R^*(t), D^*(t)$ and $\gamma^*(t)$ denote the optimal solution to \textbf{P6}. 
Then, the dynamic algorithm allocates $(1 - \gamma^*(t))(P^*(t)-R^*(t)+ D^*(t))$ power to service the delay intolerant workload and
the remaining is used for the delay tolerant workload.


After computing these quantities, the algorithm implements them and updates the queueing variable $X(t)$ according to (\ref{eq:x_t_ext}). This process is
repeated every slot. Note that in solving \textbf{P6}, the control algorithm only makes use of the current system state values and does not require
knowledge of the statistics of the workload or unit cost processes. 

We now establish two properties of the structure of the optimal solution to \textbf{P6} that will be useful
in analyzing its performance later.
\begin{lem} The optimal solution to \textbf{P6} has the following properties:
\begin{enumerate}
\item If $X(t) > -V C_{min}$, then the optimal solution always chooses $R^*(t) = 0$.
\item If $X(t) < - Q_{max}$ (where $Q_{max}$ is specified in (\ref{eq:q_max})), then the optimal solution always chooses $D^*(t) = 0$.
\end{enumerate}
\label{lem:five}
\end{lem}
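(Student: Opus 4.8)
The plan is to follow the exchange argument that establishes the basic-model analogue, Lemma~\ref{lem:two}, while accounting for the two new features of \textbf{P6}: the extra queue terms $U(t),Z(t)$ in the objective, and the extra control $\gamma(t)$ coupling $P(t)$ to the delay-intolerant constraint (\ref{eq:w2_t}). A useful preliminary observation is that $\gamma(t)$ never appears in the \textbf{P6} objective; its only job is to split the total delivered power, and feasibility of (\ref{eq:w2_t})--(\ref{eq:gamma_t}) is equivalent to the single requirement $P(t)-R(t)+D(t)\geq W_2(t)$. Hence whenever I perturb the triple $(P,R,D)$ so that $P-R+D\geq W_2(t)$ still holds, I can re-choose $\gamma$ to stay feasible and only the change in the $P,R,D$ part of the objective matters. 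Both claims are then proved by contradiction: assume the optimal solution $(P^*,R^*,D^*,\gamma^*)$ violates the stated conclusion, and exhibit a feasible perturbation with strictly larger objective.

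For part~1, first I would suppose $R^*(t)=r>0$, so that $D^*(t)=0$ by (\ref{eq:rt_dt}) and $P^*-r\geq W_2(t)$. I would compare against the alternative $R=D=0$, $P=P^*-r$ (draw less from the grid, leave the total delivered power unchanged), which is feasible since $P^*-r\geq W_2(t)$ and $P^*-r<P^*\leq P_{peak}$. Collecting terms, the objective increment equals $X(t)\,r+V\big[P^*C(P^*)-(P^*-r)C(P^*-r)\big]+VC_{rc}$; since $\hat C(S(t),\cdot)$ is non-decreasing and at least $C_{min}$, the bracket is at least $r\,C_{min}$, so the increment is at least $r\big(X(t)+VC_{min}\big)+VC_{rc}$, strictly positive when $X(t)>-VC_{min}$. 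This contradicts optimality, forcing $R^*(t)=0$.

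For part~2, suppose $D^*(t)=d>0$, so $R^*(t)=0$ and $P^*+d\geq W_2(t)$. The obvious move --- stop discharging and cover $d$ by drawing more from the grid --- may violate $P(t)\leq P_{peak}$, so I would instead set $R=D=0$ and $P'=\min(P^*+d,P_{peak})$ and re-choose $\gamma$; this is feasible because $P_{peak}\geq W_{max}>W_2(t)$ and $P^*+d\geq W_2(t)$ give $P'\geq W_2(t)$, with any shortfall of $P'$ below $P^*+d$ simply meaning less delay-tolerant work is cleared this slot. Writing $\Delta=P'-P^*\in[0,d]$ and $A=U(t)+Z(t)$, the objective increment is $A\,\Delta-V\big[P'C(P')-P^*C(P^*)\big]+VC_{dc}-(X(t)+A)\,d$. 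I would bound the cost difference by the defining inequality (\ref{eq:chi_defn}) of $\chi_{min}$, namely $P'C(P')-P^*C(P^*)\leq\chi_{min}\Delta$, then use $A\leq U_{max}+Z_{max}$, $\Delta\leq d$, and the fact that (\ref{eq:q_max}) sets $Q_{max}\geq V\chi_{min}+U_{max}+Z_{max}$: the hypothesis $X(t)<-Q_{max}$ then gives $-(X(t)+A)>V\chi_{min}$, and the increment is strictly positive. This contradicts optimality, so $D^*(t)=0$.

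I expect the main obstacle to be precisely this feasibility bookkeeping in part~2: unlike the basic model, a discharge cannot be swapped one-for-one with an increase in grid draw, so I must check both that capping at $P_{peak}$ (and clearing correspondingly less of the $U$-queue) keeps (\ref{eq:w2_t})--(\ref{eq:gamma_t}) satisfiable and that it still yields a net gain --- which works only because the recovered battery energy $d$ is weighted by $|X(t)|>Q_{max}$, a margin made large enough to absorb the $U(t)+Z(t)$ contribution. Fixing the constant in (\ref{eq:q_max}) so that it is large enough here yet does not destroy the companion upper bound on $X(t)$ needed in the performance theorem is the one genuinely delicate point; the rest transcribes directly from the proof of Lemma~\ref{lem:two}.
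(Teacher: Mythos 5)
Your reduction of the $\gamma$-bookkeeping to the single condition $P(t)-R(t)+D(t)\geq W_2(t)$ is valid, and your part~1 is correct and is essentially the paper's argument: replace $(P^*,R^*,0)$ by $(P^*-R^*,0,0)$, keep the delivered power fixed, and use monotonicity of $C$ together with $X(t)>-VC_{min}$ to show the objective of \textbf{P6} strictly increases.

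Part~2, however, has a genuine gap in its final step. You bound $U(t)+Z(t)\leq U_{max}+Z_{max}$ and then invoke ``$Q_{max}\geq V\chi_{min}+U_{max}+Z_{max}$'' to conclude $-(X(t)+U(t)+Z(t))>V\chi_{min}$. The paper's (\ref{eq:q_max}) sets $Q_{max}=V\chi_{min}+W_{1,max}+\epsilon$, whereas $U_{max}+Z_{max}=2V\chi_{min}+W_{1,max}+\epsilon$, so the inequality you need is false; and enlarging $Q_{max}$ to make it true would change the lemma's threshold and propagate into the definition (\ref{eq:x_t_def_ext}) of $X(t)$, the bounds of Theorem \ref{thm:two}, and $V^{max}_{ext}$ --- exactly the difficulty you flag but do not resolve. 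Two repairs are available. The paper splits on whether $P^*(t)+D^*(t)\leq P_{peak}$: if so, it takes $P'=P^*+D^*$ (your $\Delta=d$, so the $U+Z$ terms cancel exactly) and needs only $X(t)<-V\chi_{min}$ together with (\ref{eq:chi_defn}); if not, it takes $P'=P^*$ (drops the discharge entirely, $\Delta=0$), and the increment is $-(X(t)+U(t)+Z(t))D^*(t)+VC_{dc}>0$ using only the joint bound $U(t)+Z(t)\leq Q_{max}$ from Theorem \ref{thm:two} part~1, which is proved independently of this lemma. Alternatively, your unified choice $P'=\min(P^*+d,P_{peak})$ does work if you replace your last inequality by a case split on the sign of $A-V\chi_{min}$ with $A=U(t)+Z(t)$: the increment is at least $(A-V\chi_{min})\Delta-(X(t)+A)d+VC_{dc}$; if $A\geq V\chi_{min}$ this is positive because $X(t)+A<-Q_{max}+Q_{max}=0$, and if $A<V\chi_{min}$ then $\Delta\leq d$ gives at least $(-V\chi_{min}-X(t))d+VC_{dc}>0$ since $X(t)<-Q_{max}\leq-V\chi_{min}$. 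Either way the paper's $Q_{max}$ suffices and no constant needs to be enlarged.
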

\begin{proof} 
See Appendix C.
\end{proof}

\subsection{Solving \textbf{P6}}
\label{section:solving_P6}

Similar to \textbf{P3}, the complexity of solving \textbf{P6} depends on the structure of the unit cost function $C(t)$. For many cases of practical interest,
\textbf{P6} is easy to solve and admits closed form solutions that can be implemented in real time. We consider one such case here.

\subsubsection{$C(t)$ does not depend on $P(t)$}
\label{section:case1_ext}

For notational convenience, let $Q_1(t) = [U(t) + Z(t) - VC(t)]$ and
$Q_2(t) = [X(t) + U(t) + Z(t)]$. 

Let $\theta_1(t)$ denote the optimal value of the objective in \textbf{P6} 
when there is no recharge or discharge. When $C(t)$ does not depend on $P(t)$, this can be calculated
as follows: If $U(t) + Z(t) \geq VC(t)$, then $\theta_1(t) = Q_1(t) P_{peak}$. Else,
$\theta_1(t) = Q_1(t) W_2(t)$.

Next, let $\theta_2(t)$ denote the optimal value of the objective in \textbf{P6}
when the option of recharge is chosen, so that $R(t) > 0, D(t) = 0$. This can be calculated as follows:
\begin{enumerate}
\item If $Q_1(t) \geq 0, Q_2(t) \geq 0$, then $\theta_2(t) = Q_1(t) P_{peak} - V C_{rc}$.
\item If $Q_1(t) \geq 0, Q_2(t) < 0$, then $\theta_2(t) = Q_1(t) P_{peak} - Q_2(t) R_{max} - V C_{rc}$.
\item If $Q_1(t) < 0, Q_2(t) \geq 0$, then $\theta_2(t) = Q_1(t) W_2(t) - V C_{rc}$.
\item If $Q_1(t) < 0, Q_2(t) < 0$, then we have two cases:
\begin{enumerate}
\item If $Q_1(t) \geq Q_2(t)$, then $\theta_2(t) = Q_1(t) (R_{max} + W_2(t)) - Q_2(t) R_{max} - V C_{rc}$.
\item If $Q_1(t) < Q_2(t)$, then $\theta_2(t) = Q_1(t) W_2(t) - V C_{rc}$.
\end{enumerate}
\end{enumerate}
Finally, let $\theta_3(t)$ denote the optimal value of the objective in \textbf{P6}
when when the option of discharge is chosen, so that $D(t) > 0, R(t) = 0$. This can be calculated as follows:
\begin{enumerate}
\item If $Q_1(t) \geq 0, Q_2(t) \geq 0$, then $\theta_3(t) = Q_1(t) P_{peak} + Q_2(t) D_{max} - V C_{dc}$.
\item If $Q_1(t) \geq 0, Q_2(t) < 0$, then $\theta_3(t) = Q_1(t) P_{peak} - V C_{dc}$.
\item If $Q_1(t) < 0, Q_2(t) \geq 0$, then $\theta_3(t) = Q_1(t) \max[0, W_2(t) - D_{max}] + Q_2(t) D_{max} - V C_{dc}$.
\item If $Q_1(t) < 0, Q_2(t) < 0$, then we have two cases:
\begin{enumerate}
\item If $Q_1(t) \leq Q_2(t)$, then $\theta_3(t) = Q_1(t) \max[0, W_2(t) - D_{max}] + Q_2(t) \min[W_2(t), D_{max}] - V C_{dc}$.
\item If $Q_1(t) > Q_2(t)$, then $\theta_3(t) = Q_1(t) W_2(t) - V C_{dc}$.
\end{enumerate}
\end{enumerate}
After computing $\theta_1(t), \theta_2(t), \theta_3(t)$, we pick the mode that yields the highest value of the objective and
implement the corresponding solution.

\subsection{Performance Theorem}
\label{section:performance_theorem_ext}

We define an upper bound $V^{max}_{ext}$ on the maximum value that $V$ can take in our algorithm for the extended model.
\begin{align}
V^{max}_{ext} \defequiv \frac{Y_{max} - Y_{min} - (R_{max} + D_{max} + W_{1,max} + \epsilon)}{\chi_{min} - C_{min}}
\label{eq:v_max_ext}
\end{align}
Then we have the following result. 

\begin{thm} (Algorithm Performance) Suppose $U(0) = 0$, $Z(0) = 0$ and 
the initial battery charge level $Y_{init}$ satisfies
$Y_{min} \leq Y_{init} \leq Y_{max}$. Then implementing the algorithm above 
with any fixed parameter $\epsilon \geq 0$ such that $\epsilon \leq W_{max} - W_{2,max}$
and a parameter $V$ such that $0 < V \leq V^{max}_{ext}$ for all $t \in \{0, 1, 2, \ldots\}$ results in the
following performance guarantees:
\begin{enumerate}
\item The queues $U(t)$ and $Z(t)$ are deterministically upper bounded by $U_{max}$ and $Z_{max}$ respectively for all $t$ where:
\begin{align}
&U_{max} \defequiv V \chi_{min} + W_{1,max} \label{eq:u_max} \\
&Z_{max} \defequiv V \chi_{min} + \epsilon \label{eq:z_max}
\end{align}
Further, the sum $U(t) + Z(t)$ is also deterministically upper bounded by $Q_{max}$ where
\begin{align}
Q_{max} \defequiv V \chi_{min} + W_{1,max} + \epsilon\label{eq:q_max}
\end{align}

\item The queue $X(t)$ is deterministically upper and lower bounded for all $t$ as follows:
\begin{align}
- Q_{max} - D_{max} \leq X(t) &\leq Y_{max} - Y_{min} - Q_{max} \nonumber \\
&\;\;\;- D_{max}
\label{eq:x_bounds_ext}
\end{align}

\item The actual battery level $Y(t)$ satisfies $Y_{min} \leq Y(t) \leq Y_{max}$ for all $t$.

\item  All control decisions are feasible.

\item The worst case delay experienced by any delay tolerant request is given by:
\begin{align}
\Big\lceil \frac{2V \chi_{min} + W_{1,max} + \epsilon}{\epsilon} \Big\rceil
\end{align}

\item If $W_1(t), W_2(t)$ and $S(t)$ are i.i.d. over slots, then
the time-average cost under the dynamic algorithm is within $B_{ext}/V$ of the optimal value:
\begin{align}
\lim_{t\rightarrow\infty} \frac{1}{t} \sum_{\tau = 0}^{t-1}& \expect{P(\tau) C(\tau) + 1_R(\tau) C_{rc} + 1_D(\tau) C_{dc}} \nonumber \\
&\qquad \qquad \leq \hat{\phi}_{ext} + B_{ext}/V
\label{eq:utility_bound_ext}
\end{align}
where $B_{ext}$ is a constant given by $B_{ext} = (P_{peak} + D_{max})^2 + \frac{(W_{1,max})^2 + \epsilon^2}{2} + B$ 
and $\hat{\phi}_{ext}$ is the optimal solution to \textbf{P4} under \emph{any} feasible control algorithm (possibly
with knowledge of future events). 
\end{enumerate}
\label{thm:two}
\end{thm}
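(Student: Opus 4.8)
The plan is to follow the proof of Theorem~\ref{thm:one} closely: establish the deterministic claims (parts~1--4) by a joint induction on $t$ using the structural Lemma~\ref{lem:five}, obtain part~5 as an immediate instance of Lemma~\ref{lem:worst_case_delay}, and prove the cost bound (part~6) by a drift-plus-penalty argument with the joint Lyapunov function $L(\boldsymbol{Q}(t)) \defequiv \frac{1}{2}\big(X^2(t)+U^2(t)+Z^2(t)\big)$. For part~1 I would first note that (\ref{eq:chi_defn}) with $P_1=0$ forces $C(P,S)\le\chi_{min}$ for all $P,S$, that the coefficient of $P(t)$ in the objective of \textbf{P6} is $[U(t)+Z(t)]-VC(t)$, and that the coefficient of $D(t)-R(t)$ is $[X(t)+U(t)+Z(t)]$. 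The key point is that once $U(t)+Z(t)$ exceeds $V\chi_{min}$ the first coefficient is strictly positive, so the algorithm draws $P^*(t)=P_{peak}$; since $P_{peak}\ge W_{max}+\max[R_{max},D_{max}]$, the delay-tolerant service $\gamma^*(t)(P^*(t)-R^*(t)+D^*(t)) = P^*(t)-R^*(t)+D^*(t)-W_2(t)$ is then at least $W_{max}-W_2(t)\ge W_1(t)$ and at least $W_{max}-W_{2,max}\ge\epsilon$, so neither $U$ nor $Z$ can grow in that slot. Combining this with the trivial increments ($W_1(t)\le W_{1,max}$ for $U$, $\epsilon 1_{U(t)}\le\epsilon$ for $Z$) when the backlogs are below the threshold yields $U(t)\le U_{max}$ and $Z(t)\le Z_{max}$ by induction, and the joint bound $U(t)+Z(t)\le Q_{max}$ by applying the same dichotomy to the sum $U(t)+Z(t)$ (using $[Z(t)-s(t)+\epsilon 1_{U(t)}]^+\le [Z(t)-s(t)]^+ + \epsilon$ together with $s(t)\ge W_1(t)$ and $s(t)\ge\epsilon$).

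For part~2 I would reuse the induction of Theorem~\ref{thm:one} part~1 with $V\chi_{min}$ replaced by $Q_{max}$: the base case follows from $Y_{min}\le Y_{init}\le Y_{max}$ and (\ref{eq:x_t_def_ext}); the upper bound uses Lemma~\ref{lem:five}(1) ($R^*(t)=0$ when $X(t)>-VC_{min}$) together with the fact that $V\le V^{max}_{ext}$ makes $-VC_{min}+R_{max}\le Y_{max}-Y_{min}-Q_{max}-D_{max}$ via (\ref{eq:v_max_ext}); the lower bound uses Lemma~\ref{lem:five}(2) ($D^*(t)=0$ when $X(t)<-Q_{max}$) and a worst-case $D_{max}$ decrease. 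Parts~3 and~4 then follow immediately from (\ref{eq:x_t_def_ext}) and the feasibility region built into \textbf{P6}, and part~5 is obtained by inserting $U_{max}$ and $Z_{max}$ from (\ref{eq:u_max})--(\ref{eq:z_max}) into Lemma~\ref{lem:worst_case_delay}, giving $\lceil(U_{max}+Z_{max})/\epsilon\rceil=\lceil(2V\chi_{min}+W_{1,max}+\epsilon)/\epsilon\rceil$.

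For part~6 I would compute $\Delta(\boldsymbol{Q}(t))\defequiv\expect{L(\boldsymbol{Q}(t+1))-L(\boldsymbol{Q}(t))|\boldsymbol{Q}(t)}$, bound each squared term via the elementary inequality $(\max[q-b,0]+a)^2\le q^2+a^2+b^2+2q(a-b)$ applied to (\ref{eq:u_t}) and (\ref{eq:z_t}) and a direct expansion of (\ref{eq:x_t_ext}), add the penalty $V\expect{P(t)C(t)+1_R(t)C_{rc}+1_D(t)C_{dc}|\boldsymbol{Q}(t)}$, and collect the constants $\frac{1}{2}(R(t)-D(t))^2\le B$, $\frac{1}{2}(W_{1,max}^2+(P_{peak}+D_{max})^2)$ and $\frac{1}{2}(\epsilon^2+(P_{peak}+D_{max})^2)$ into exactly $B_{ext}$. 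Using the identity $\gamma(t)(P(t)-R(t)+D(t))=P(t)-R(t)+D(t)-W_2(t)$ from (\ref{eq:w2_t}), the controllable part of the resulting bound is, up to a state-dependent constant, the negative of the \textbf{P6} objective, so the dynamic algorithm minimizes the drift-plus-penalty bound over all feasible controls, in particular over the stationary randomized policy of Lemma~\ref{lem:three}. Substituting that policy, the guarantees $\expect{\hat R(t)}=\expect{\hat D(t)}$ and $\expect{\hat\gamma(t)(\hat P(t)-\hat R(t)+\hat D(t))}\ge\expect{W_1(t)}$ make the $X$- and $U$-linear terms non-positive and collapse the bound to $\Delta(\boldsymbol{Q}(t))+V\expect{\text{cost}(t)|\boldsymbol{Q}(t)}\le B_{ext}+V\hat\phi_{ext}$; telescoping over $t\in\{0,\dots,T-1\}$, dividing by $VT$, letting $T\to\infty$, and using $L\ge0$ with $L(\boldsymbol{Q}(0))$ finite yields (\ref{eq:utility_bound_ext}).

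I expect the main obstacle to be part~1: the three queues $X$, $U$, $Z$ are coupled both through the shared service term and through the coefficients of \textbf{P6}, so the induction must be carried for all of them simultaneously, and the tighter joint bound $U(t)+Z(t)\le Q_{max}$ in particular requires the $\epsilon$-persistent update of (\ref{eq:z_t}) and the assumptions $P_{peak}\ge W_{max}+\max[R_{max},D_{max}]$ and $\epsilon\le W_{max}-W_{2,max}$ to be combined in just the right way. A secondary subtlety in part~6 is checking that the virtual-queue ($Z$) linear term contributes nothing harmful when the Lemma~\ref{lem:three} policy is plugged in, i.e. that the $\epsilon$-persistence bookkeeping does not leak an additional $\epsilon$-scale term into the cost bound.
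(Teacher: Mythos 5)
Your proposal follows essentially the same route as the paper's proof in Appendix~D: induction on $U$, $Z$, and their sum using the threshold $V\chi_{min}$ and the slack assumptions $P_{peak}\ge W_{max}+\max[R_{max},D_{max}]$ and $\epsilon\le W_{max}-W_{2,max}$ for part~1, the shifted induction with Lemma~\ref{lem:five} for part~2, and the drift-plus-penalty argument with $L(\boldsymbol{Q}(t))=\frac{1}{2}(U^2(t)+Z^2(t)+X^2(t))$ and the stationary policy of Lemma~\ref{lem:three} for part~6, with your constant accounting reproducing $B_{ext}$ exactly. The one step to tighten is the claim that positivity of the coefficient $[U(t)+Z(t)]-VC(t)$ forces $P^*(t)=P_{peak}$: when $C$ depends on $P$ this does not follow from positivity alone, and you need the full monotonicity statement of (\ref{eq:chi_defn}) with $\chi=(U(t)+Z(t))/V\ge\chi_{min}$, i.e. $[U(t)+Z(t)-VC(P_1)]P_1\le[U(t)+Z(t)-VC(P_2)]P_2$ for $P_1\le P_2$, which is exactly how the paper argues it.
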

Thus, by choosing larger $V$, the time-average cost under the dynamic algorithm can be pushed
closer to the minimum possible value $\phi_{opt}$. However, this increases the worst case delay bound yielding
a $O(1/V, V)$ utility-delay tradeoff. Also note that  $V^{max}_{ext}$ limits how large $V$ can be chosen.

\begin{proof} 
 See Appendix D.
\end{proof}

\begin{figure}
\centering
\includegraphics[width=8.5cm,angle=0]{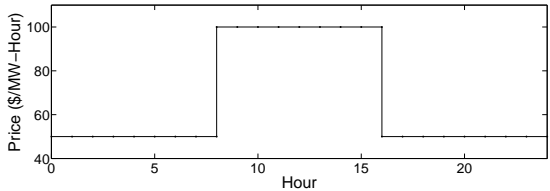}
\caption{One period of the unit cost process.}
\label{fig:c_periodic}
\end{figure}

\begin{figure}
\centering
\includegraphics[width=8.5cm,angle=0]{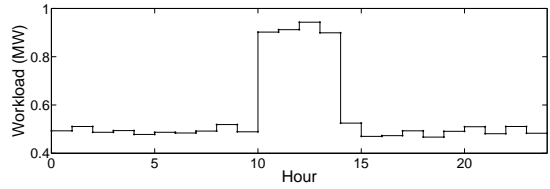}
\caption{One period of the workload process.}
\label{fig:w_periodic}
\end{figure}

\section{Simulation-based Evaluation}
\label{section:simulations}

We evaluate the performance of our control algorithms using both synthetic and
real pricing data. To gain insights into the behavior of our algorithms and to compare with the optimal offline solution, we first consider the basic model and
use a simple periodic unit cost and workload process as shown in Figs. \ref{fig:c_periodic} and \ref{fig:w_periodic}. 
These values repeat every $24$ hours and the unit cost does not depend on $P(t)$.
From Fig. \ref{fig:c_periodic}, it can be seen that $C_{max} = \$100$ and  $C_{min} = \$50$. Further, we have that $\chi_{min} = C_{max} = 100$. 
We assume a slot size of $1$ minute so that the
control decisions on $P(t), R(t), D(t)$ are taken once every minute. We fix the parameters $R_{max} = 0.2$ MW-slot, $D_{max} = 1.0$ MW-slot, $C_{rc} = C_{dc} = 0, Y_{min} = 0$.
We now simulate the basic control algorithm of Sec. \ref{section:case1} for different values of $Y_{max}$ and with $V = V_{max}$. For each $Y_{max}$, the simulation is
performed for a duration of $4$ weeks.

In Fig. \ref{fig:compare_periodic}, we plot the average cost per hour under the dynamic algorithm for different values of battery size $Y_{max}$. It can be seen that 
the average cost reduces as $Y_{max}$ is increased and converges to a fixed value for large $Y_{max}$, as suggested by Theorem \ref{thm:one}. For this simple example, we
can compute the minimum possible average cost per hour (over all battery sizes) 
and this is given by $\$33.23$ which is also the value to which the dynamic algorithm converges as $Y_{max}$ is increased.
Moreover, in this example, we can also compute the optimal offline cost for \emph{each value} of $Y_{max}$. These also also plotted in Fig. \ref{fig:compare_periodic}. 
It can be seen that, for each $Y_{max}$,
the dynamic algorithm performs quite close to the corresponding optimal value, 
even for smaller values of $Y_{max}$. Note that Theorem $1$ provides such guarantees only for sufficiently large
values of $Y_{max}$. Finally, the average cost per hour when no battery is used is given by $\$39.90$.

\begin{figure}
\centering
\includegraphics[width=8.5cm,angle=0]{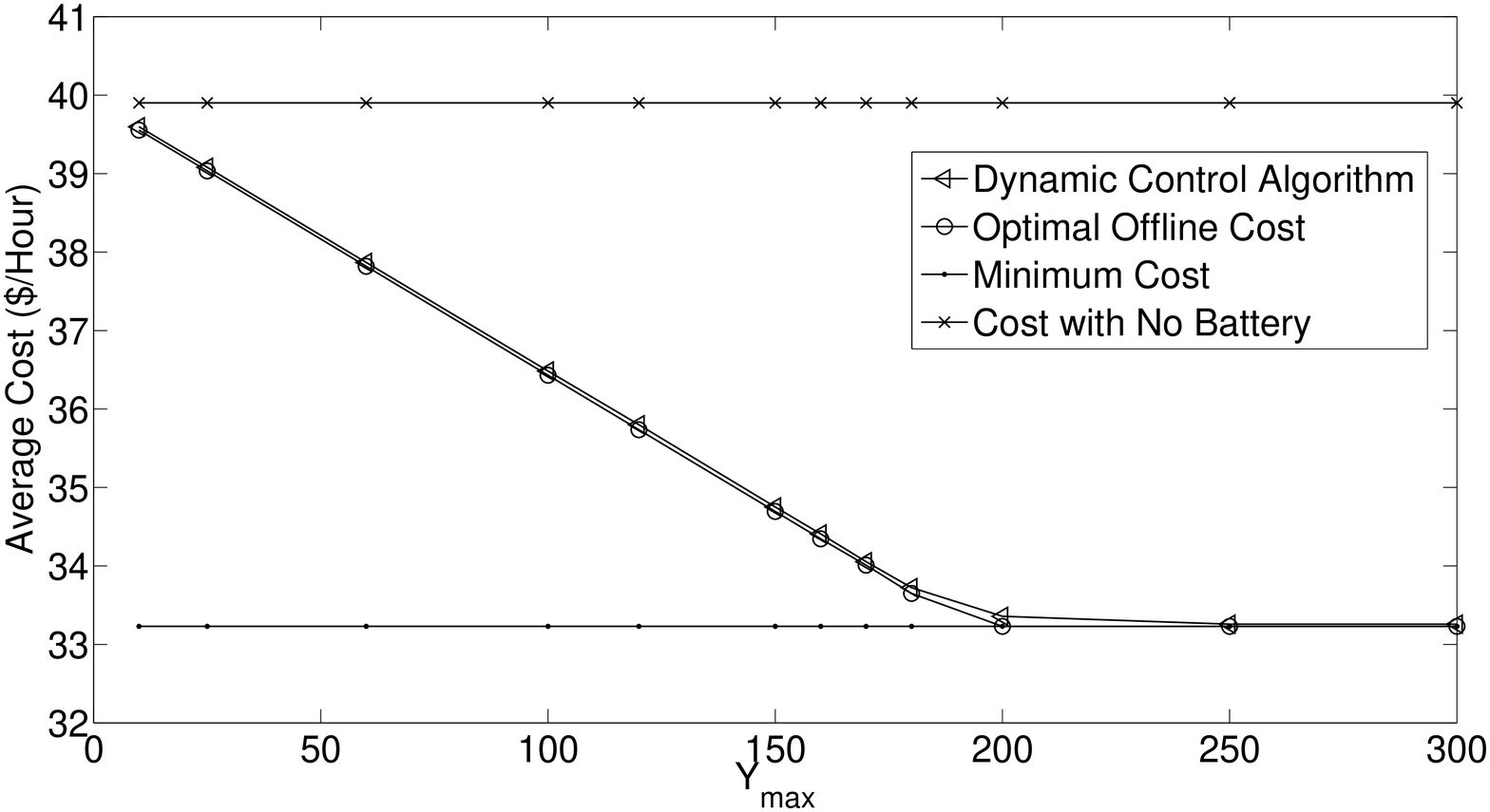}
\caption{Average Cost per Hour vs. $Y_{max}$.}
\label{fig:compare_periodic}
\end{figure}

We next consider a six-month data set of average hourly spot market prices
for the Los Angeles Zone LA1 obtained from CAISO~\cite{caiso}.
These prices correspond to the period 01/01/2005$-$06/30/2005 and
each value denotes the average price of 1 MW-Hour of electricity.
A portion of this data corresponding to the first week of January is plotted
in Fig.~\ref{fig:one_week}. We fix the slot size to 5 minutes so that 
control
decisions on $P(t), R(t), D(t)$, etc. are taken once every 5 minutes.
The unit cost $C(t)$ obtained from the data set for each hour is assumed
to be fixed for that hour. Furthermore, we assume that the unit cost does not
depend on the total power drawn $P(t)$.



In our experiments,
we assume that the data center receives workload in
an i.i.d fashion. Specifically, every slot, $W(t)$ takes values from the
set [0.1,1.5] MW uniformly at random. We fix the
parameters $D_{max}$ and $R_{max}$ to $0.5$ MW-slot, $C_{dc}=C_{rc}= \$0.1$, and $Y_{min} = 0$. Also, $P_{peak} = W_{max} + R_{max} = 2.0$ MW.
We now simulate four algorithms on this setup for different values of
$Y_{max}$. The length of time the battery can power the data center
if the draw were $W_{max}$ starting from fully charged battery is
given by $\frac{Y_{max}}{W_{max}}$ slots, each of length $5$ minutes. We consider
the following four control techniques: (A) ``No battery, No WP,'' which
meets the demand in every slot using power from the grid and without
postponing any workload, (B) ``Battery, No WP,'' which employs the
algorithm in the basic model without postponing any workload, (C) ``No Battery, WP,'' which employs the extended model for WP but without any battery,
and (D) ``Complete,'' the complete algorithm of the extended model
with both battery and WP. For (C) and (D), we assume that during every slot, half of the total
workload is delay-tolerant.

We simulate these algorithms to obtain 
the total cost over  the $6$ month period
for $Y_{max} \in \{15, 30, 50\}$ MW-slot. For (B), we use $V=V_{max}$ while for
(C) and (D), we use $V=V^{max}_{ext}$ with $\epsilon = W_{max}/2$.
Note that an increased battery capacity should have no effect on the performance under (C).
In order to get a fair comparison with the other schemes,
we assume that the worst case delay guarantee that case (C) must
provide for the delay tolerant traffic is the same as that under (D). 

Fig.~\ref{fig:compare_real_prices_iid_w} plots the total cost under these schemes over the $6$ month period.
In Table~\ref{table:table2}, we show the ratio of the total cost under schemes (B), (C), (D) to the total cost under (A)
for these values of $Y_{max}$ over the $6$ month period. It can be seen that
(D) provides the most cost savings over the baseline case.

\begin{figure}
\centering
\includegraphics[height=4.5cm, width=8.5cm,angle=0]{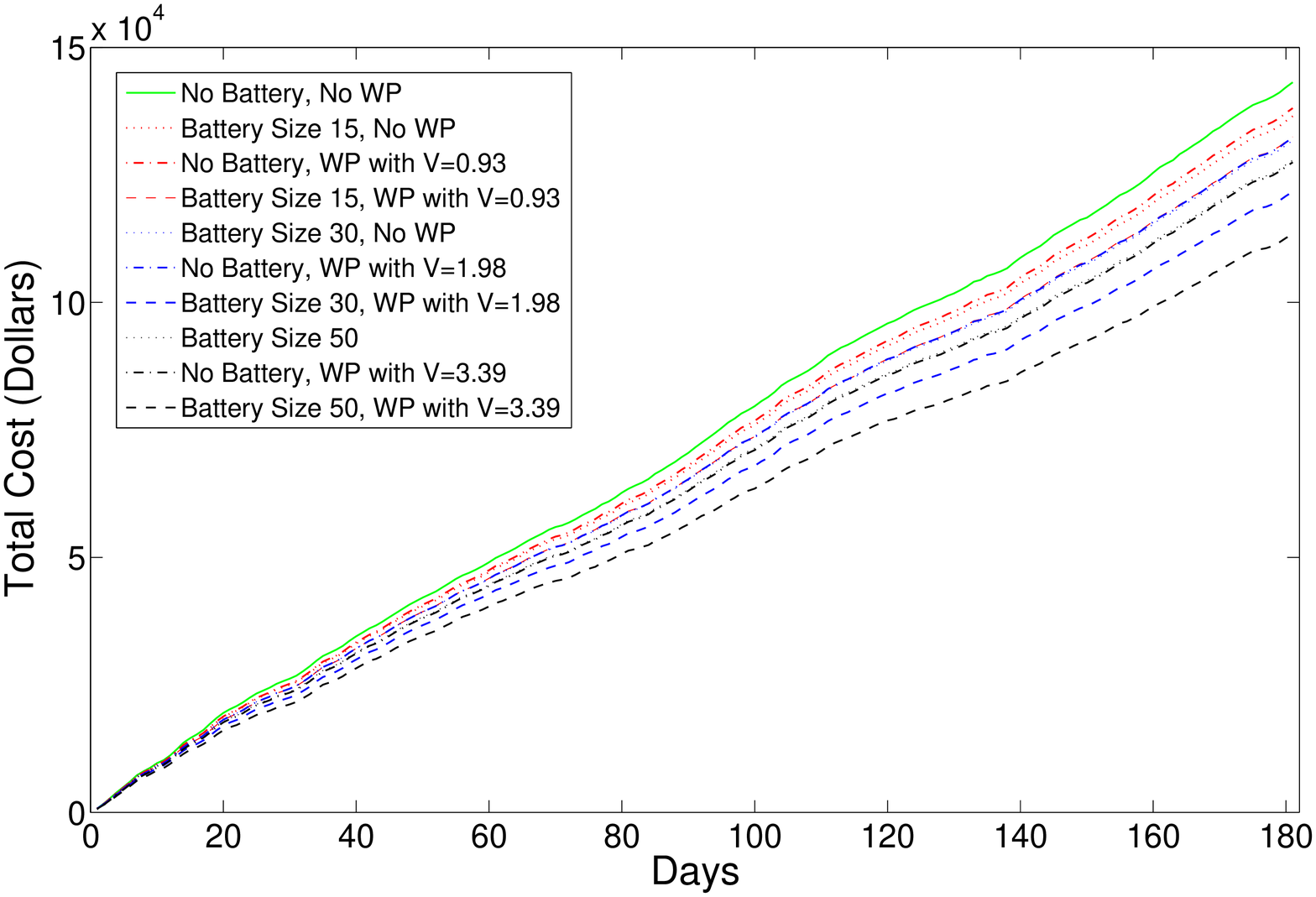}
\caption{Total Cost over $6$ months with i.i.d $W(t)$ and different $Y_{max}$}
\label{fig:compare_real_prices_iid_w}
\end{figure}


\begin{table}
\centering
 \begin{tabular}{|c|c|c|l|} \hline
   $Y_{max}$ & 15 & 30 & 50 \\ \hline
   Battery, No WP & 95\% & 92\% & 89\% \\ \hline
   WP, No Battery & 96\% & 92\% & 88\% \\ \hline
   WP, Battery & 92\% & 85\% & 79\%\\ \hline
  \end{tabular}
\caption{Ratio of cost under schemes (B), (C), (D) to the cost under (A)
for different values of $Y_{max}$ with i.i.d. $W(t)$.}
\label{table:table2}
\end{table}

\section{Conclusions and Future Work}
\label{section:conc}

In this paper, we studied the problem of 
opportunistically using energy storage devices 
to reduce the time average electricity bill of a data center. Using the technique of
Lyapunov optimization, we designed an online control algorithm that achieves close to
optimal cost as the battery size is increased.

We would like to extend our current framework along several important directions including: (i)
multiple utilities (or captive sources such as DG)
with different price variations and availability properties (e.g.,
certain renewable sources of energy are not available at all times), (ii) tariffs where the
utility bill depends on peak power draw in addition to the energy consumption, and (iii)
devising online algorithms that offer solutions whose proximity to the optimal has a smaller 
dependence on battery capacity than currently. We also plan to explore implementation
and feasibility related concerns such as: (i) what are appropriate trade-offs between investments
in additional battery capacity and cost reductions that this offers? (ii) what is the extent
of cost reduction benefits for realistic data center workloads? and  (iii) does stored energy
make sense as a cost optimization knob in other domains besides data centers?
Our technique could be viewed as a design tool which, when parameterized well,
can assist in  determining suitable configuration parameters such as
battery size, usage rules-of-thumb, time-scale at which decisions should be made, etc.
Finally, we believe that our work opens up a whole set of interesting issues worth exploring in the
area of consumer-end (not just data centers) demand response mechanisms for power cost
optimization.








\section*{Acknowledgments}
This work was supported in part by the NSF Career grant CCF-0747525.

\section*{Appendix A -Proof of Lemma \ref{lem:two}}

We can rewrite the expression in the objective of \textbf{P3} as $[X(t) + V C(t)]P(t) + V[1_R(t)C_{rc} + 1_D(t) C_{dc}]$. 
To show part $1$, suppose $R^*(t) = \delta > 0$ when $X(t) > -V C_{min}$, so that we have $P^*(t) - \delta = W(t)$, $D^*(t) = 0$, $1_R(t) = 1$, and $1_D(t) = 0$.
Then the value of the objective is given by:
\begin{align*}
&[X(t) + V C(P^*(t))]P^*(t) + V C_{rc} = \\
&[X(t) + V C(W(t)+\delta)] (W(t) + \delta) + V C_{rc} >\\
&[X(t) + V C(W(t))] W(t)
\end{align*}
where the last step follows by noting that $X(t) + V C(W(t)) > 0$ when $X(t) > -V C_{min}$ and that
$C(t)$ in non-negative and non-decreasing in $P(t)$. The last expression denotes the value of the objective
when $R^*(t) = D^*(t) = 0$ and all demand is met using power drawn from the grid and is smaller. 
This shows that when $X(t) > -V C_{min}$, then the optimal solution cannot choose $R^*(t) > 0$.



Next, to show part $2$,  suppose $D^*(t) = \delta > 0$ when $X(t) < -V\chi_{min}$, so that we have $P^*(t) + \delta = W(t)$, $R^*(t) = 0$, 
and $1_D(t) = 1$. Then the value of the objective is given by:  
\begin{align*}
&[X(t) + V C(P^*(t))]P^*(t) + V C_{dc} = \\
&[X(t) + V C(W(t)-\delta)](W(t) - \delta) + V C_{dc} \geq \\
&[X(t) + V C(W(t)-\delta)] (W(t) - \delta) >\\
&[X(t) + V C(W(t))] W(t)
\end{align*}
where in the last step, we used the property (\ref{eq:chi_defn}) together with the fact that $X(t) < -V\chi_{min}$.
The last expression denotes the value of the objective
when $R^*(t) = D^*(t) = 0$ and all demand is met using power drawn from the grid and is smaller.
This shows that when $X(t) < -V\chi_{min}$, then the optimal solution cannot choose $D^*(t) > 0$.

\section*{Appendix B - Proof of Bound (\ref{eq:drift2})}

Squaring both sides of (\ref{eq:x_t}), dividing by $2$, and rearranging yields:
\begin{align*}
\frac{X^2(t+1) - X^2(t)}{2} =  \frac{(D(t) - R(t))^2}{2} - X(t)[D(t) - R(t)]
\end{align*}
Now note that under any feasible algorithm, at most one of $R(t)$ and $D(t)$ can be non-zero. 
Further, since $R(t) \leq R_{max}, D(t) \leq D_{max}$ for all $t$, we have:
\begin{align*}
\frac{(D(t) - R(t))^2}{2}  \leq \frac{\max[R^2_{max}, D^2_{max}]}{2} = B
\end{align*}
Taking conditional expectations of both sides given $X(t)$, we have:
$\Delta(X(t)) \leq B - X(t)\expect{D(t) - R(t)|X(t)}$.


\section{Proof of Lemma 5}

Suppose $X(t) > -V C_{min}$ and $R^*(t) > 0, D^*(t) = 0$. 
Then, we have that
$W_2(t) = (1 - \gamma^*(t))(P^*(t) - R^*(t))$.
After rearranging, the value of the objective of \textbf{P6} can be expressed as:
\begin{align*}
[U(t) + Z(t)](&P^*(t) - R^*(t)) - VP^*(t)C(P^*(t)) - V C_{rc} \\
- X(t)R^*(t) &< [U(t) + Z(t)](P^*(t) - R^*(t)) \\
&- V[P^*(t) - R^*(t)]C(P^*(t) - R^*(t)) 
\end{align*}
where we used the inequalities 
$P^*(t) C(P^*(t) - R^*(t)) < P^*(t)C(P^*(t))$ 
and $X(t) + V C(P^*(t) - R^*(t)) > 0$.
The first follows from the non-negative and non-decreasing property of $C(t)$ in
$P(t)$. The second follows by noting that $X(t) > -V C_{min}$.
Now note that the right hand side denotes the value of the objective when power
$P(t) = P^*(t) - R^*(t)$ is drawn from the grid and the battery is not recharged or
discharged. This is a feasible option since by choosing $\gamma(t) = 0$, we have that
$W_2(t) = (P^*(t) - R^*(t))$.
This shows that when $X(t) > 0$, $R^*(t) > 0$ is not optimal.
This shows part $1$.

Next, suppose $X(t) < - Q_{max} < -V \chi_{min}$ and $D^*(t) > 0$, 
$R^*(t) = 0$.
Then, we have that
$W_2(t) = (1 - \gamma^*(t))(P^*(t) + D^*(t))$.
We consider two cases:\\

(1) $P^*(t) + D^*(t) \leq P_{peak}$: After  rearranging,
the value of the objective of \textbf{P6} can be expressed as:
\begin{align*}
[U(t) + Z(t)](&P^*(t) + D^*(t)) - VP^*(t)C(P^*(t)) - V C_{dc}  \\
+  X(t)D^*(t) &< [U(t) + Z(t)](P^*(t) + D^*(t)) \\
&- VP^*(t)C(P^*(t)) - V \chi_{min} D^*(t)
\end{align*}
where we used the fact that $X(t) < -V \chi_{min}$ and $D^*(t) > 0$. 
Using the property (\ref{eq:chi_defn}), we have:
\begin{align*}
(P^*(t) + D^*(t)) C(P^*(t) + D^*(t)) - P^*(t)C(P^*(t)) \\ \leq \chi_{min} D^*(t)
\end{align*}
Using this in the inequality above, we have:
\begin{align*}
[U(t) + Z(t)]&(P^*(t) + D^*(t)) - VP^*(t)C(P^*(t)) - V C_{dc}  \\
+ X(t)D^*(t) < &\; [U(t) + Z(t)](P^*(t) + D^*(t))\\
& - V(P^*(t) + D^*(t)) C(P^*(t) + D^*(t))
\end{align*}
Note that the last term denotes the value of objective when power
$P(t) = P^*(t) + D^*(t) \leq P_{peak}$ is drawn from the grid and the battery is not recharged or
discharged. This is a feasible option since by choosing 
$\gamma(t) = 0$, we have that
$W_2(t) = (P^*(t) + D^*(t))$.
This shows that for this case, when $X(t) < -V \chi_{min}$, $D^*(t) > 0$ is not optimal. \\

(2) $P^*(t) + D^*(t) > P_{peak}$: The value of the objective of \textbf{P6} is given by:
\begin{align*}
&[U(t) + Z(t)]P^*(t) - VP^*(t)C(P^*(t)) - V C_{dc}  \\
&+ [U(t) + Z(t) + X(t)]D^*(t) < [U(t) + Z(t)]P^*(t) \\
& \qquad \qquad - VP^*(t)C(P^*(t))
\end{align*}
where we used the fact that since $X(t) < - Q_{max}$ and
$U(t) + Z(t) \leq Q_{max}$ (Theorem \ref{thm:two} part $1$),
we have $[U(t) + Z(t) + X(t)] D^*(t) < 0$.
The last term in the inequality above 
denotes the value of the objective when power
$P(t) = P^*(t)$ is drawn from the grid and the battery is not recharged or
discharged. To see that this is feasible, note that we need
$(1 - {\gamma}(t)) P^*(t) = W_2(t)$ where ${\gamma}(t)$ must be $\leq 1$. Since
$W_2(t) \leq W_{2,max}$, this implies:
\begin{align*}
1 - {\gamma}(t) = \frac{W_2(t)}{P^*(t)} \leq \frac{W_{2,max}}{P^*(t)} \leq \frac{W_{2,max}}{P_{peak} - D_{max}} 
\end{align*}
where we used the fact that  $P^*(t) \geq P_{peak} - D^*(t) \geq P_{peak} - D_{max}$. Now since
$W_{2,max} \leq P_{peak} - D_{max}$, the last term above is $\leq 1$, so that choosing
$P(t) = P^*(t)$ and $D(t) = 0$ is a feasible option. This shows that for this case as well,
$D^*(t) > 0$ is not optimal.


\section*{Appendix D - Proof of Theorem \ref{thm:two}, parts 2-6}

Here, we prove parts $1-6$ of Theorem \ref{thm:two}.

\begin{proof} (Theorem \ref{thm:two} part $1$) We first show (\ref{eq:u_max}).
Clearly, (\ref{eq:u_max}) holds for $t=0$.
Now suppose it holds for slot $t$.  We will show that it also holds for slot $t+1$.
First suppose $U(t) \leq V \chi_{min}$. Then, by (\ref{eq:u_t}),
the most that $U(t)$ can increase in one slot is $W_{1,max}$ so that $U(t+1) \leq V \chi_{min} + W_{1,max}$. Next,
suppose $V \chi_{min} < U(t) \leq V \chi_{min} + W_{1,max}$. 
Now consider the terms involving $P(t)$ in the objective of \textbf{P6}: $[U(t) + Z(t) - V C(P(t))]P(t)$.
Since $U(t) > V \chi_{min}$, using property (\ref{eq:chi_defn}), we have:
\begin{align*}
[U(t) + Z(t) - &V C(P(t))]P(t) \leq \\ 
&[U(t) + Z(t) - V C(P_{peak})]P_{peak}
\end{align*}
Thus, the optimal solution to problem \textbf{P6} chooses
$P^*(t) = P_{peak}$.
Now, let $R^*(t), D^*(t)$ and $\gamma^*(t)$ denote the other control decisions by the optimal solution to \textbf{P6}. 
Then the amount of power remaining for the data center (after recharging or discharging the battery) is 
$P_{peak} - R^*(t) + D^*(t)$. Out of this, a fraction $1 - \gamma^*(t)$ is used to serve the delay intolerant workload. Thus:
\begin{align*}
(1 - \gamma^*(t))[P_{peak} - R^*(t) + D^*(t)] = W_2(t) 
\end{align*}
Using this, and the fact that $R^*(t) \leq R_{max}$, we have:
\begin{align*}
&\gamma^*(t)[P_{peak} - R^*(t) + D^*(t)] \\
&= P_{peak} - R^*(t) + D^*(t) - W_2(t) \\
&\geq P_{peak} - R_{max} - W_2(t) \geq W_1(t)
\end{align*}
where we used the fact that $W_1(t) + W_2(t) \leq W_{max} \leq P_{peak} - R_{max}$.
Thus, using (\ref{eq:u_t}), it can be seen that the amount of new arrivals to $U(t)$ cannot exceed the
total service and this yields $U(t+1) \leq U(t)$.

(\ref{eq:z_max}) can be shown by similar arguments.
Clearly, (\ref{eq:z_max}) holds for $t=0$.
Now suppose it holds for slot $t$.  We will show that it also holds for slot $t+1$.
First suppose $Z(t) \leq V \chi_{min}$. Then, by (\ref{eq:z_t}),
the most that $Z(t)$ can increase in one slot is $\epsilon$ so that $Z(t+1) \leq V \chi_{min} + \epsilon$. Next,
suppose $V \chi_{min} < Z(t) \leq V \chi_{min} + \epsilon$. Then, by a similar argument as before,
the optimal solution to problem \textbf{P6} chooses $P^*(t) = P_{peak}$
Now, let $R^*(t), D^*(t)$ and $\gamma^*(t)$ denote the other control decisions by the optimal solution to \textbf{P6}.
Then the amount of power remaining for the data center is $P_{peak} - R^*(t) + D^*(t)$.
Out of this, a fraction $(1 - \gamma^*(t))$ is used for the delay intolerant workload. Thus:
\begin{align*}
(1 - \gamma^*(t))[P_{peak} - R^*(t) + D^*(t)] = W_2(t)
\end{align*}

Using this, and the fact that $R^*(t) \leq R_{max}$, we have:
\begin{align*}
&\gamma^*(t)[P_{peak} - R^*(t) + D^*(t)]
\geq P_{peak} - R_{max} - W_2(t) \\
&\geq P_{peak} - R_{max} - W_{2,max} \geq W_{max} - W_{2,max} \geq \epsilon
\end{align*}
where we used the fact that $W_2(t) \leq W_{2,max}$ and $W_{max} \leq P_{peak} - R_{max}$.
Thus, using (\ref{eq:z_t}), it can be seen that the amount of new arrivals to $Z(t)$ cannot exceed the
total service and this yields $Z(t+1) \leq Z(t)$.

(\ref{eq:q_max}) can be shown by similar arguments and the proof is omitted for brevity.
\end{proof}

\begin{proof} (Theorem \ref{thm:two} part $2$)
We first show that (\ref{eq:x_bounds_ext}) holds for $t=0$. Using the definition
of $X(t)$ from (\ref{eq:x_t_def_ext}), we have that 
$Y(0) = X(0) + Q_{max} + D_{max} + Y_{min}$. Since $Y_{min} \leq Y(0)$,
we have:
\begin{align*}
&Y_{min} \leq X(0) +  Q_{max} + D_{max} + Y_{min} \\
\implies &- Q_{max} - D_{max} \leq X(0)
\end{align*}
Next, we have that $Y(0) \leq Y_{max}$, so that:
\begin{align*}
&X(0) +  Q_{max} + D_{max} + Y_{min} \leq Y_{max}\\
\implies & X(0) \leq Y_{max} - Y_{min} - Q_{max} -  D_{max}
\end{align*}
Combining these two shows that $-Q_{max} - D_{max} \leq X(0) \leq Y_{max} - Y_{min} - Q_{max} - D_{max}$. 

Now suppose (\ref{eq:x_bounds_ext}) holds for slot $t$.  We will show that it also holds for slot $t+1$. First, suppose $-V C_{min} < X(t) \leq 
Y_{max} - Y_{min} - Q_{max} - D_{max}$. Then,
from Lemma \ref{lem:five}, we have that $R^*(t) = 0$. Thus, using (\ref{eq:x_t_ext}) we have that $X(t+1) \leq X(t) \leq Y_{max} - Y_{min} - Q_{max} - D_{max}$. 
Next, suppose $X(t) \leq -V C_{min}$. Then, the maximum possible increase is $R_{max}$ so that $X(t+1) \leq -V C_{min} + R_{max}$.
Now for all $V$ such that $0 \leq V \leq V^{ext}_{max}$, we have that
$-V C_{min} + R_{max} \leq Y_{max} - Y_{min} - (D_{max} + W_{1,max} + \epsilon) - V \chi_{min} = Y_{max} - Y_{min} - Q_{max} - D_{max}$.
This follows from the definition (\ref{eq:v_max_ext}) and the fact that $\chi_{min} \geq C_{min}$. 
Using this,
we have $X(t+1) \leq Y_{max} - Y_{min} - Q_{max} - D_{max}$ for this case as well. This
establishes that $X(t+1) \leq Y_{max} - Y_{min} - Q_{max} - D_{max}$. 

Next, suppose $-Q_{max} - D_{max} \leq X(t) < - Q_{max}$. Then, from Lemma \ref{lem:five}, we have that $D^*(t) = 0$. 
Thus, using (\ref{eq:x_t_ext}) we have that 
$X(t+1) \geq X(t) \geq - Q_{max} - D_{max}$.
 Next, suppose $- Q_{max} \leq X(t)$. Then, the
maximum possible decrease is $D_{max}$ 
so that $X(t+1) \geq - Q_{max} - D_{max}$ for this case as well. This shows that
$X(t+1) \geq - Q_{max} - D_{max}$. Combining these two bounds proves (\ref{eq:x_bounds_ext}).
\end{proof}


\begin{proof} (Theorem \ref{thm:two} parts $3$ and $4$) Part $3$ directly follows from (\ref{eq:x_bounds_ext}) and (\ref{eq:x_t_def_ext}). 
Using $Y(t) = X(t) + Q_{max} + D_{max} + Y_{min}$ in the lower bound in (\ref{eq:x_bounds_ext}), we have:
\begin{align*}
&- Q_{max} - D_{max} \leq Y(t) - Q_{max} - D_{max} - Y_{min} \\
&\implies  Y_{min} \leq Y(t)
\end{align*}
Similarly, using $Y(t) = X(t) + Q_{max} + D_{max} + Y_{min}$ in the upper bound in (\ref{eq:x_bounds_ext}), we have:
\begin{align*}
& Y(t) - Q_{max} - D_{max} - Y_{min} \leq Y_{max} - Y_{min} - Q_{max} - D_{max}\\
&\implies Y(t) \leq Y_{max}
\end{align*}
Part $4$ now follows from part $3$ and the constraint on $P(t)$ in \textbf{P6}. 
\end{proof}

\begin{proof} (Theorem \ref{thm:two} part $5$) This follows from part $1$ and Lemma \ref{lem:worst_case_delay}.
\end{proof}

\begin{proof} (Theorem \ref{thm:two} part $6$) 
We use the following Lyapunov function: $L(\boldsymbol{Q}(t)) \defequiv \frac{1}{2}(U^2(t) + Z^2(t) + X^2(t))$. 
Define the conditional $1$-slot Lyapunov drift as follows:
\begin{align}
\Delta(\boldsymbol{Q}(t)) \defequiv \expect{L(\boldsymbol{Q}(t+1)) - L(\boldsymbol{Q}(t))|\boldsymbol{Q}(t)}
\label{eq:drift1_ext}
\end{align}
Using (\ref{eq:u_t}), (\ref{eq:z_t}), (\ref{eq:x_t_ext}), the drift + penalty term 
can be bounded as follows:
\begin{align}
&\Delta(\boldsymbol{Q}(t)) + V \expect{P(t)C(t) + 1_R(t)C_{rc} + 1_D(t)C_{dc}|\boldsymbol{Q}(t)} \leq \nonumber \\ 
&B_{ext} - [U(t)+Z(t)]\expect{P(t)|\boldsymbol{Q}(t)} - W_2(t)[U(t) + Z(t)]\nonumber\\
&- [X(t)+U(t)+Z(t)]\expect{D(t) - R(t)|\boldsymbol{Q}(t)} \nonumber \\
&+ V \expect{P(t)C(t) + 1_R(t)C_{rc} + 1_D(t)C_{dc}|\boldsymbol{Q}(t)}
\label{eq:drift2_ext}
\end{align}
where $B_{ext} = (P_{peak} + D_{max})^2 + \frac{(W_{1,max})^2 + \epsilon^2}{2} + B$. 
Comparing this with \textbf{P6}, it can be seen that given any queue value $X(t)$,
our control algorithm is designed to \emph{minimize} the
right hand side of (\ref{eq:drift2_ext}) over all possible feasible control policies. This includes the
optimal, stationary, randomized policy given in Lemma $\ref{lem:three}$. Using the same argument as before,
we have the following:
\begin{align*}
&\Delta(\boldsymbol{Q}(t))+ V \expect{P(t)C(t) + 1_R(t)C_{rc} + 1_D(t)C_{dc}|\boldsymbol{Q}(t)}  \leq \nonumber \\
& B_{ext} + V \expect{\hat{P}(t)\hat{C}(t) + \hat{1}_R(t)C_{rc} + \hat{1}_D(t)C_{dc}|\boldsymbol{Q}(t)} \nonumber \\
& = B_{ext} + V \hat{\phi}_{ext}
\end{align*}
Taking the expectation of both sides and using the law of iterated expectations 
and summing over $t \in \{0, 1, 2, \ldots, T-1\}$, we get
\begin{align*}
&\sum_{t=0}^{T-1} V \expect{P(t)C(t) + 1_R(t)C_{rc} + 1_D(t)C_{dc}} \leq \\
& B_{ext}T + VT \hat{\phi}_{ext} - \expect{L(\boldsymbol{Q}(T))} + \expect{L(\boldsymbol{Q}(0))}
\end{align*}
Dividing both sides by $VT$ and taking limit as $T \rightarrow\infty$ yields:
\begin{align*}
\lim_{T\rightarrow\infty} \frac{1}{T} \sum_{t = 0}^{T-1}& \expect{P(t) C(t) + 1_R(t) C_{rc} + 1_D(t) C_{dc}} \leq \\
\hat{\phi}_{ext} + B_{ext}/V
\end{align*}
where we used the fact that $\expect{L(\boldsymbol{Q}(0))}$ is finite and 
that $\expect{L(\boldsymbol{Q}(T))}$ is non-negative.
\end{proof}

\section*{Appendix E - Full Solution for Convex, Increasing $C(t)$}

Let $C'(S, P)$ be the derivative of  $\hat{C}(S, P)$ with respect to $P$. Also, let $P'$ denote the solution to 
the equation $C'(S, P) = 0$ and $C(P') = \hat{C}(S, P')$. 
Then, the optimal solution can be obtained as follows:
\begin{enumerate}
\item If $P_{low} \leq P' \leq W(t)$, then $R^*(t)=0$ and we have the following two cases:
\begin{enumerate}
\item If $P'(X(t) + V C(P')) + V C_{dc} < \theta(t)$, then $P^*(t) = P'$, $D^*(t) = W(t) - P'$.
\item Else, draw all power from the grid. This yields $D^*(t) = 0$ and $P^*(t) = W(t)$.
\end{enumerate}

\item If $W(t) < P' \leq P_{high}$, then $D^*(t)=0$ and we have the following two cases:
\begin{enumerate}
\item If $P'(X(t) + V C(P')) + V C_{rc} < \theta(t)$, then $P^*(t) = P'$, $R^*(t) = P' - W(t)$.
\item Else, draw all power from the grid. This yields $R^*(t) = 0$ and $P^*(t) = W(t)$.
\end{enumerate}

\item If $P_{high} < P'$, then $R^*(t)=0$ and we have the following two cases:
\begin{enumerate}
\item If $P_{high}(X(t) + V C(P_{high})) + V C_{dc} < \theta(t)$, then $P^*(t) = P_{high}$, $D^*(t) = W(t) - P_{high}$.
\item Else, draw all power from the grid. This yields $D^*(t) = 0$ and $P^*(t) = W(t)$.
\end{enumerate}

\item If $P' < P_{low}$, then $D^*(t)=0$ and we have the following two cases:
\begin{enumerate}
\item If $P_{low}(X(t) + V C(P_{low})) + V C_{rc} < \theta(t)$, then $P^*(t) = P_{low}$, $R^*(t) = P_{low} - W(t)$.
\item Else, draw all power from the grid. This yields $R^*(t) = 0$ and $P^*(t) = W(t)$.
\end{enumerate}
\end{enumerate}


\end{document}